\newtheorem{theorem}{Theorem}
\newtheorem{proposition}{Proposition}
\newtheorem{lemma}{Lemma}
\newtheorem{definition}{Definition}
\newtheorem{corollary}{Corollary}
\newcommand{\hlf}{\frac{1}{2}}
\newcommand{\supp}{\text{Supp}}
\newcommand{\mc}{\mathcal}
\newcommand{\spn}{\mathrm{Span}}
\newcommand{\mb}{\mathbb}
\newcommand{\id}{\text{id}}
\newcommand{\im}{\mathrm{Im}\,}
\title{Gauge Structures:\\ From Stabilizer Codes to Continuum Models}
\author{Albert T. Schmitz\footnote{contact: albert.schmitz.colorado.edu}} 
\affil{Department of Physics and Center for Theory of Quantum Matter\\University of Colorado, Boulder, CO 80309, USA}
\begin{document}
\maketitle

\abstract{Stabilizer codes are a powerful method for implementing fault-tolerant quantum memory and in the case of topological codes, they form useful models for topological phases of matter. In this paper, we discuss the theory of stabilizer codes as a discrete version of a {\it linear gauge structure}, a concept we introduce here. A linear gauge structure captures all the familiar ingredients of $U(1)$ gauge theory including a generalization of charge conservation, Maxwell equations and topological terms. Using this connection, we prove some important results for stabilizer codes which can be used to characterize their error-correction properties. However, this perspective does not depend on any particular Hamiltonian or Lagrangian, which as a consequence is agnostic to the source of the gauge redundancy. Based upon the connection to stabilizer codes, we view the source of the gauge redundancy as an ambiguity in the tensor product structure of the Hilbert space. That is, the gauge provides a set of equivalent but arbitrary ways to factorize the Hilbert space. We apply this formalism to the $d=2$ and $3$ toric code as well as the paradigm fracton models, X-Cube and Haah's cubic code. From this perspective, we are able to map all these models to some continuum version which captures all or some of the fractonic and topological features of their parent discrete models.
}
      
\section{Introduction}\label{sec:intro}
Gauge theory has been a successful means of describing nature. Though first thought of as a symmetry, the more modern view is that it reflects a redundancy in the description of the physical system. However, there must be more to the theory than just our labels have more ``letters'' than necessary. Gauge theory captures how we are forced to make a necessary but arbitrary choice. Furthermore, we expect these redundancies have real observable consequences. 

Continuum gauge theory is most naturally couched in the language of differential geometry and Lie algebras as is the case for Yang-Mills theories. However, this clearly doesn't translate to discrete and finite examples. It is desirable to frame gauge theory in a categorical way such that there is a common language for anything that might be called a gauge theory. To advance toward this goal, we look to find a common thread between the continuum and discrete theories. For the discrete case, we focus on topological stabilizer codes which are best exemplified by Kitaev's toric code \cite{Kitaev2003}. Stabilizer codes as introduced in \cite{Gottesman1997} are one form of fault-tolerant quantum memory, but can also act as useful, exactly-solvable models for topological phases of matter. An interesting subset of these codes are the so-called {\it fracton} stabilizer codes as first  introduced in Refs.~\cite{Chamon2005,Nussinov2009,Nussinov2009a}, made prominent in Refs.~\cite{Bravyi2011,Haah2011,Yoshida2013,Vijay2015,Vijay2016} and elaborated upon in Refs.~\cite{ Williamson2016, Pretko2017a, Pretko2017b, Vijay2017,Ma2017, Kim2017, Hsieh2017, Pretko2017, Slagle2017, Devakul2017, Petrova2017, Prem2017a, Prem2017c, Schmitz2018, Ma2018, Slagle2018a, Slagle2018b,Bulmash2018, Pai2018a, Ma2018a, Song2018}. Fractonic models are typically 3D models characterized by excitations which are constrained to move on sub-dimensional manifolds. The two most studied fracton stabilizer codes are X-cube (Type I)\cite{Vijay2015, Vijay2016} and Haah's cubic code (Type II) \cite{Haah2011}, henceforth referred to as ``Haah's code,'' though there are many others. Continuous (gapless) models exhibiting the same phenomenology have also been found and explored in the work by M. Pretko and others on tensor gauge theories \cite{Pretko2017a, Pretko2017b, Pretko2017, Pretko2018, Pai2018, Slagle2018a}. For a review on the state of fractons, see Ref.~\cite{Nandkishore2018}.

In this paper, we explore fracton stabilizer codes as their novelty, oddity and lack of full rotation symmetry demonstrates the need for a more general viewpoint on gauge theories beyond Yang-Mills theory. To this end, we introduce a mathematical construct which we term a {\it linear gauge structure}. This can be used to characterize both stabilizer codes and $U(1)$ gauge theory. This framework has the benefit of being independent of any Hamiltonian/Lagrangian, Hilbert space or algebraic field (as in $\mb R$ versus $\mb F_2$). We can also use it to prove some powerful results, most notably a generalization of charge conservation (or {\it BrLE rules} 
as we term them; see Section~\ref{sec:ex}) which is responsible for the fractonic behavior of the models we consider. We also discuss topological features as captured by a certain subset of gauge structures we term{ \it symplectic gauge structures}, which includes all stabilizer codes. For stabilizer codes, topological features are characterized by the logical operators which act non-trivially on the code space. We prove that these operators are connected to a set of {\it constraints}, or product of stabilizers which multiply to the identity. These special constraints are topological as they are no longer constraints if we change the topology of the system. We find an explicit construction for logical operators such that they form in the intersection of the boundary with these topological constraints. 

The machinery of gauge structures naturally leads to a means of mapping discrete stabilizer codes to continuum models and we apply this to all examples introduced below. The toric code maps to $U(1)$ gauge theory, whereas the analogous Maxwell equations for the X-cube model have the form
\begin{subequations}
\begin{align}
j_0 =& (\partial_1^2 \partial_2^2 + \partial_2^2 \partial_3^2 + \partial_3^2  \partial_1^2) A_0, \label{eq:XC0}\\
j_1 =& \partial_2^2(A_3-A_1) - \partial^2_3(A_1-A_2), \label{eq:XC1}\\
j_2 =& \partial_3^2(A_1- A_2) - \partial_1^2(A_2-A_3),\label{eq:XC2} \\
j_3=& \partial_1^2(A_2-A_3) -\partial_2^2(A_3-A_1),\label{eq:XC3}
\end{align}
\end{subequations}
where $\partial_i$ is the derivative in the $i^{th}$ coordinate direction and $j_i$ and $A_i$ are the currents/charges and potential fields, respectively. Each term on the LHS of Eq. \ref{eq:XC0} has two derivatives in different coordinates and as a consequence,
\begin{align}
\int dx_i dx_j j_0 =0,
\end{align}
for $i\neq j$. For similar reasons, Eqs \eqref{eq:XC1},\eqref{eq:XC2} and \eqref{eq:XC3} imply,
\begin{align}
\int dx_i dx_{j} j_{k} =0,
\end{align}
where $i \neq j \neq k$. Furthermore, we also find that $j_1 + j_2 +j_3 =0$. We argue these conditions are the conservation laws that are responsible for fractonic behavior in this continuum model. This is also similar to the ``hollow'' gauge theory first discussed in \cite{Ma2018a}.

For Haah's code, the Maxwell equations are
\begin{subequations}\label{eq:HaahMax}
\begin{align}
j_0 =& \left(\left(\partial_{\text{mix}}^2\right)^2 - \partial_{[111]}^2\right) A_0, \\
j_1 =&  \left(\left(\partial_{\text{mix}}^2\right)^2 - \partial_{[111]}^2\right) A_1,
\end{align}
\end{subequations}
where $\partial_{[111]}= \partial_1 + \partial_2 + \partial_3$ and $\partial_{\text{mix}}^2= \partial_1 \partial_2 + \partial_2 \partial_3 + \partial_1 \partial_3 = \hlf\left( \partial_{[111]}^2- \partial^2 \right)$. Again, this model is similar to the model discussed in \cite{Bulmash2018}. As we show, the only reasonable way to write the conservation law for these charges is by noting that
\begin{align}\label{eq:haahcon}
\int d^3x j_i a_i =0,
\end{align}
where $a_i$ is any field which satisfies $\partial_{\text{mix}}^2 a_i= \partial_{[111]}a_i =0$.  Clearly Eq. \eqref{eq:haahcon} is true for the fields as given by Eqs. \eqref{eq:HaahMax}. One can think of $a_i$ as defining a kind of moment for the charges of this model. Though it is not clear this model is completely fractonic, it does lend some intuition for better understanding Haah's code. In particular, we discuss a possible connection between Haah's code and coupled layers of the 2D classical model introduced by Newman and Moore \cite{Newman1999}. As a final comment, we discuss our interpretation of gauge structures. As there is no Hamiltonian or Lagrangian, a gauge structure is agnostic to matter and thus the usual interpretation of gauge theory as a phase invariance of matter fields no longer applies. Instead, we suggest the gauge redundancy can be interpreted as a necessary by arbitrary choice in the tensor product structure (TPS) of the Hilbert space. 

The structure of the paper is as follows: we first introduce all of the necessary background and theory of stabilizer codes in Section \ref{sec:stabgen} including some new results on the relation between constraints among the stabilizers and the excitations and logical operators. We then abstract these notions to define a linear gauge structure in Section \ref{sec:gauge} and prove some important general results. This machinery is used in Section \ref{sec:connecting} as well as the polynomial ring formalism for translationally symmetric stabilizer codes \cite{Haah2013} to form the continuum versions of the toric code and fracton codes introduced above. We then revisit Haah's code using the continuum version to gain some intuition. Finally, we make some concluding remarks and discuss our interpretation in Section \ref{sec:discus}.       

\subsection{Notation}

We use small Roman letters for operators and elements of sets, capital Roman letters for finite simple sets and at times continuum fields (where there should be no confusion between the two uses), and script Roman letters for linear or algebraic spaces. All maps use Greek letters. The composition of maps $\alpha$ and $\beta$ is written as $\alpha \beta$, $\star$ represents the pullback, $\iota_{\mc A}$ the inclusion map for the space $\mc A$ and $\id_{\mc A}$ the identity map on $\mc A$. $\dim \mc A$ is the dimension of $\mc A$. For the sake of readability, we use square-brackets for the image and pre-image i.e, $\alpha[A]$ and $\alpha^{-1}[A]$, respectively. We also use $+$ for the symmetric difference\footnote{Recall the symmetric difference is give by $A + B = A \cup B - A\cap B$.}, where there should be no confusion when used for sets, $\wp(A)$ for the power set of $A$ or the set of all subsets of $A$ and $|A|$ for the number of elements in the finite set $A$. 

\section{Stabilizer Codes} \label{sec:stabgen} 

In this section, we elaborate on stabilizer codes and their relevant features, following and extending results from \cite{Gottesman1997, Haah2011, Haah2013, Terhal2015}. Such details are required for a full understanding of the connection to the aforementioned gauge structure. Those already familiar with such details may skip to the next section though some results toward the end are new to the best of our knowledge.

We consider primarily spin-$\hlf$ or qubit systems, but it should be obvious that all results easily generalize to arbitrary spin or qudit systems.  

\subsection{The Pauli Vector Space and Pauli Boolean Algebra}

We start with a system of qubits typically arranged on the vertices, edges or other geometric unit of some graph (though this is not necessary). 
Also, we typically require each vertex has the same number of qubits, $n$ (though again, this is not necessary). 
 $Q$ denote the set of all qubits with $N =|Q|$ being the total number of qubits. Thus the Hilbert space is $\mc H \simeq \mb C_2^{\otimes N}$ with dimension $ 2^N$. 
Initially, we make no assumption on the symmetry of the system, though we only consider examples with translation symmetry.

Let $\mc P$ be the set of all products of single-qubit X-, Y- and Z-type Pauli operators (simply referred to as Pauli operators) acting on the qubits of our system, modulo any phase of $\pm1, \pm i$. We use $x_q,y_q$ and $z_q$ for the single-qubit X-, Y-, and Z-type Pauli operators for qubit $q \in Q$. For any member $f \in \mc P$ we define the {\it support} as
\begin{align}
\supp(f) =\{q \in Q: f_q \neq \id_{\mb C_2} \},
\end{align}
where $f_q$ is the part of $f$ acting in the product space associated with $q\in Q$. Likewise for any $F\subseteq \mc P$, $\supp(F) = \bigcup_{f \in F} \supp(f)$. Ignoring the phase allows us to treat $\mc P$ as a vector space over the field of two elements, $\mb F_2$, where addition of two Pauli's is  given by their product and scalar multiplication corresponds to the power. Since $f^2 \propto \id_{\mc H}$, $\mb F_2$ is the appropriate field. But as we have modded out the phase, it seems we have thrown away the commutation relations. We can recover this information by introducing the symplectic form $\lambda: \mc P\times \mc P \to \mb F_2$, which encodes the commutation relations via
\begin{align}\label{eq:lamdef1}
(f,g) \mapsto \lambda(f,g)= \begin{cases}
0 & \text{ if $f$ and $g$ commute,}\\
1 & \text{otherwise.}
\end{cases}
\end{align}
As any two Pauli's either commute or anti-commute, this encodes all commutation relations. It is symplectic since it is a bi-linear non-degenerate form (a concept discussed below) which satisfies $\lambda(f,g)= -\lambda(g,f)$. We have a natural basis for $\mc P$, namely the set of all single qubit $X$- and $Z$-type Pauli's (as $y_q \propto x_q z_q$). As this is a basis, we conclude $\dim \mc P = 2 N$. Note this basis also has a special property that it divides into two subsets $\{x_q : q \in Q\}$ and $\{z_q: q \in Q\}$ such that $\lambda(x_q, x_p) = \lambda(z_q,z_p)=0$ and $\lambda(x_q, z_p) = \delta_{qp}$. This form of basis is general for a symplectic vector space. That is, given any basis, one can always form a {\it canonical basis},  $\{f_i\}_{i\in \mb Z_N}\cup \{g_i\}_{i\in \mb Z_N}$ such that $\lambda(f_i, f_j) = \lambda(g_i,g_j)=0$ and $\lambda(f_i,g_j) = \delta_{ij}$. As a corollary, the maximum number of independent, mutually commuting operators is $N$. 

To make the connection with continuum gauge structures, it is worth mapping $\mc P$ onto a different $\mb F_2$ vector space. Consider that $\wp (Q)^2 = \wp ( Q)_X \times \wp (Q)_Z$ is another $\mb F_2$ vector space with addition given by entry-wise symmetric difference such that for $F=(F_X, F_Z), G= (G_X, G_Z) \in \wp (Q)^2$, $F + G = (F_X+ G_X, F_Z + G_Z)$. It should be obvious that $\wp (Q)^2$ is isomorphic to $\mc P$, where for $f \in \mc P$ and $F \in \wp(Q)^2$ such that $f \mapsto F$, the first factor ($F_X$) represents the set of all X-type single Pauli operators forming $f$ and the second factor ($F_Z$) is the set of all Z-type Pauli operators forming $f$. In this language, we can also express $\lambda$ as
\begin{align}\label{eq:lamdef2}
\lambda(F,G) = |F_X \cap G_Z| + |G_X \cap F_Z|,
\end{align}
where the sum is mod 2. Although it is clear this reproduces $\lambda$ as first defined, it is worth pointing out that $\lambda$ is linear because intersection distributes over symmetric difference, owing to the fact that $\wp (Q)^2$ is a Boolean algebra with intersection as the product. Finally, we can express the support function as 
\begin{align}
\supp(F) = F_X\cup F_Z.
\end{align}

\subsection{The Stabilizer Set, Group and Boolean Algebra}\label{sec:stab}

We now define the stabilizer set and its associated group and Boolean algebra.

\begin{definition} \label{def:stab}
 The set $ S \subset \mc P$ (or $\wp (Q)^2$ if you like) is a stabilizer set if at minimum it satisfies the following:
\begin{enumerate}
\item $ \lambda[S \times S]=\{0\}$, i.e. it is composed of mutually commuting operators and
\item $ id_\mc H \notin  S$.
\end{enumerate}

Though these conditions are technically sufficient, it is often best to include a few additional restrictions:
\begin{enumerate}
\setcounter{enumi}{2}
\item $|S| \geq N$, i.e. there are at least as many stabilizers as there are qubits in the system,
 \item $\supp( S) =Q$, i.e. every qubit is acted upon nontrivially by at least one member of $S$ and
\item Any symmetry of the {\it stabilizer group} is also a symmetry of $S$.
\end{enumerate}
\end{definition}

The last requirement involves the stabilizer group denoted by $\mc G$ and defined as
\begin{align}
\mc G= \{ \prod_{s \in F} s \mod(\pm 1, \pm i) : F\in \wp (S)\}.
\end{align}
Thus $\mc G \subset \mc P$ is the set of all Pauli operators generated by taking products of members from $S$, modulo any phase. We continue to use the term group as this is the usual terminology, but for our purposes, it is much more useful to think of it as a subspace of $\mc P$ which is spanned by all of $S$. We refer to members of $S$ as stabilizers and members of $\mc G$ as stabilizer group elements.\footnote{This is not the usual terminology as all members of $\mc G$ are typically called stabilizers. But for the purposes of this discussion, it is better to distinguish only members of $S$ as the stabilizers.} To further explain requirement 5 of Definition \ref{def:stab}, we let the set of symmetries of $\mc G$ be denoted by $\Pi$ which contains members $\pi : \mc P \to \mc P$ (again or $\wp (Q)^2 \to \wp(Q)^2$ ) which are isomorphisms that preserves all of $\mc G$, i.e. $\pi[\mc G] = \mc G$. So our last requirement means that if $\pi \in \Pi$, then $\pi[S] = S$. This may seem like an obvious requirement, but in fact, it is pivotal to this discussion. It has the effect of forcing certain members to be contained in $ S$ and necessitates the exists of constraints as discussed below. Furthermore, it is a reasonable requirement from a physical viewpoint. We do not want the stabilizer set to treat some parts of the graph (our discrete proxy for space) differently than others. We also note this requirement is not ideal in the sense that we don't want to make symmetry an essential ingredient. What we want is a kind of homogeneity of space, which is not necessarily the same as preserving symmetry. We discuss cases where neither $\mc G$ nor $S$ have a symmetry, but we still want all parts of the graph to be treated on equal footing. 

The stabilizer code is then defined as the subspace $\mc H_{\text{code}} \subseteq\mc H$ such that
\begin{align}\label{eq:code}
\mc H_{\text{code}}= \{ \ket{\psi} \in \mc H : s\ket{\psi} = \ket{\psi} \text{ for all } s\in S\}.
\end{align}
%
This can be conveniently understood as the ground space of the Hamiltonian, 
\begin{align}\label{eq:ham}
H= - \sum_{s \in S} s,
\end{align}
where we implicitly choose the phase of $s$ such that it is Hermitian. However, we emphasis that nothing that follows requires this to be the Hamiltonian. 

We now introduce our examples. The first  is the toric code in $2$ and $3$ dimensions. In complete generality, we consider any tessellation of the $2$- or $3$-torus. To each edge, we assign one qubit and the stabilizer set is given by
\begin{align}
S_{TC} = \{a_v, b_f: v \text{ vertices and } f \text{ faces}\}.
\end{align}
$b_f= \prod_{i \in f} z_i$, where $i$ indexes the qubits about the face $f$ and $a_v = \prod_{i @ v} x_i$, where $i$ indexes the edge qubits coordinated to the vertex $v$. Here we see the limitations of the symmetry requirement. Take the $2$-torus for example. We can remove exactly one face and one vertex stabilizer from $S_{\text{TC}}$ and $\mc G_{TC}$ would be unaffected (see discussion in Section \ref{sec:const}). If the tessellation is the square lattice, then the symmetry requirement forbids removing these stabilizers. However, if the tessellation is a random or aperiodic tiling, there may not be any symmetry to forbid removing these stabilizers.  However, by our intuitive notion of treating all parts of the graph equally, we would include all face and all vertex stabilizers. This point aside, we henceforth only consider the toric code on the square and cubic lattices where faces are referred to as plaquettes.

The first fracton model we consider is the X-cube model.  Though this has been generalized to other geometric structures \cite{Slagle2018b}, we consider the original defined on a cubic lattice with a qubit associate to each edge and  
\begin{align}
S_{\text{XC}} = \{a_v^{\hat e_1} , a_v^{\hat e_2}, a_v^{\hat e_3}, b_c : v \text{ vertices and } c \text{ primitive cubes}\}.
\label{eq:HXC}
\end{align}
$a_v^{\hat e_j} = \prod_{ i @ v_j} z_i$, where $i$ indexes the qubits coordinated to vertex $v$ and for edges confined to the plane normal to $\hat e_j$ and $b_c = \prod_{i \in c} x_i$, where $i$ indexes the qubits about the primitive cube $c$. This is the paradigm example of a Type I fracton model. Finally, we also consider the paradigm example of a Type II fracton model, Haah's code, which is defined on the cubic lattice with two qubits per vertex and 
\begin{align}
S_{\text{Haah}}= \{g_v^X,  g^Z_v: v \text{ vertices}\},
\end{align}
where $g_v^{X,Z}$ are defined in Fig. \ref{haah0}.

\begin{figure}[t]
\centering
\includegraphics[scale=1.2]{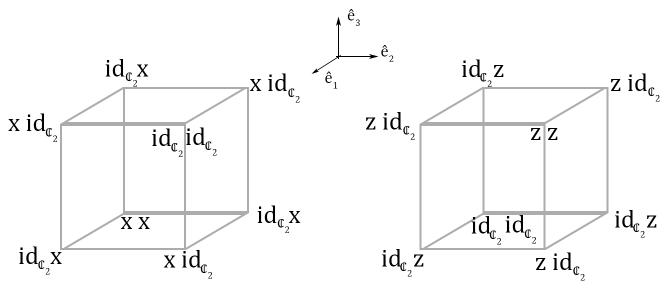}
\caption{Definition of the two stabilizer types in Haah's code were $g_v^x$ is on the left and $g_v^z$ on the right. All others are related by translation.}\label{haah0}
\end{figure}
All examples presented here are Calderbank, Shor and Steane (CSS) stabilizer codes \cite{Terhal2015}. That is, stabilizers are composed of either all $X$-type or $Z$-type operators. Even though our examples are of the CSS form, we do not require it for what follows. 

Similar to the discussion of the Pauli space and Boolean algebra, we consider the algebra $(\wp (S), +, \cap)$ as generated by the stabilizer set. This is also equipped with a two-form $\omega: \wp(S)\times \wp(S) \to \mb F_2$ such that
\begin{align}
(A,B) \mapsto \omega(A,B) =|A \cap B| \mod 2.
\end{align}
Unlike $\lambda$, this is not a symplectic form, nor is it an inner product as $\omega(A,A)=0$ does not imply $A=0$. However, it is bilinear (a not so obvious property) and non-degenerate. If we were to map the space $\wp (S)$ onto $\{0,1\}^{|S|}$, i.e. the set of binary strings of length $|S|$ with standard addition mod 2, one recognizes $\omega$ as the binary ``dot product,'' so this definition is natural. $\omega$ can be used to construct vectors using the special singleton basis $\{\{s\}: s\in S\}$ via 
\begin{align} \label{eq:recon}
A= \sum_s \omega(A, \{s\}) \{s\}.
\end{align}
It was shown in Ref. \cite{Bernhard2009} that such a 2-form is {\it invertible}\footnote{In the reference, it is referred to as the reconstruction identity.} which states in simple terms that if we can calculate $\omega(A, B)$ for all $B\in \wp(S)$, then we can construct $A$ uniquely and Eq. \eqref{eq:recon} is one way to do so. We give a rigorous definition of invertibility in Section \ref{sec:gauge}.

\subsection{Constraints}\label{sec:const}

To relate members of $\wp(S)$ back to operators, we use the map $\phi:\wp(S) \to \mc P$ such that
\begin{align}
A \mapsto \phi(A)= \prod_{s\in A}s. 
\end{align}
Note that $\im \phi = \mc G$. It should be clear that $\phi$ is linear, however, it is neither surjective nor injective. We call the space $\ker \phi$  the {\it constraint space} and its members constraints for which $C \in \ker\phi$ implies $\prod_{s\in C} s \propto \id_{\mc H}$.  In all cases considered here, some of these constraints have special meaning with respects to the topology of the system. To characterizes these constraints, 
we define a {\it change of boundary} map.
\begin{definition}
A change of boundary is a bijective map $\beta: S\to S'$ where both $S$ and $S'$ are stabilizer sets on the same set of qubits $Q$ such that:
\begin{enumerate}
\item The logical subspace for $S'$ is trivial\footnote{See Section \ref{sec:lop} for definition. We consider this map to be something like changing the topology of the system, though the Hilbert space and number of qubits has not changed. We take the perspective that it is the stabilizers which encode the topology as encapsulated by the gauge homology/logical subspace.},
\item $\mc G \cap \mc G'$ is generated by $S\cap S'$ and
\item for all $s_1, s_2 \in S$, $\lambda\left(\left(\beta(s_1) + s_1\right), \left(\beta(s_2) + s_2\right)\right) =0$.
\end{enumerate} 
\end{definition}
In the context of topological stabilizer codes, we often refer to $S$ as having {\it periodic boundary conditions} (pbc) and $S'$ as having  {\it open boundary conditions} (obc). The first condition says the change of boundary removes all topology. The last two are the purposed generalization to the notion that the change of boundary alters only a sub-extensive number of stabilizers i.e. stabilizers should only be altered along a $d-1$ boundary in $d$ dimensions. Such a condition would only make sense on a lattice and we want to keep the discussion as general as possible. Condition 2 can be interpreted as the two stabilizer groups $\mc G$ and $\mc G'$ are the same away from the boundary, where boundary refers to the support of altered stabilizers. As for the last condition, $\beta(s) + s$ represents the part of the operator which is altered by the change of boundary and we want this change to be ``minimal'' in some sense. 
To demonstrate how condition 3 facilitates this, consider the $d=2$ toric code where we start forming a change of boundary by removing one X-type Pauli from some vertex stabilizer. This process is depicted in Fig. \ref{minbd}. The new T-shaped operator anti-commutes with two plaquette stabilizers and we have to alter them to form $S'$. However, if we were to remove the Z-Pauli operator on that same qubit for either of these two plaquettes, not only would we not change the commutativity of the resulting U-shaped operators with our T-shaped operator, but we would create even more places where the altered operators anti-commute with stabilizers. So a more ``minimal'' choice is to remove Z-type operators which do not overlap with the X-type operator already removed. If this is not convincing, condition 3 as well as condition 2 are sufficient to prove an important result below, so we take them as reasonable.  Note that condition 3 can be written as $\lambda(\beta(s_1), s_2) = \lambda(s_1, \beta(s_2))$ for all $s_1, s_2 \in S$, where we use linearity and the fact that both $S$ and $S'$ are proper stabilizer sets. Also, $\beta$ is entirely symmetric with respects to $S$ and $S'$ except for condition 1 as one might expect. 

\begin{figure}[t]
\centering
\includegraphics[scale=.4]{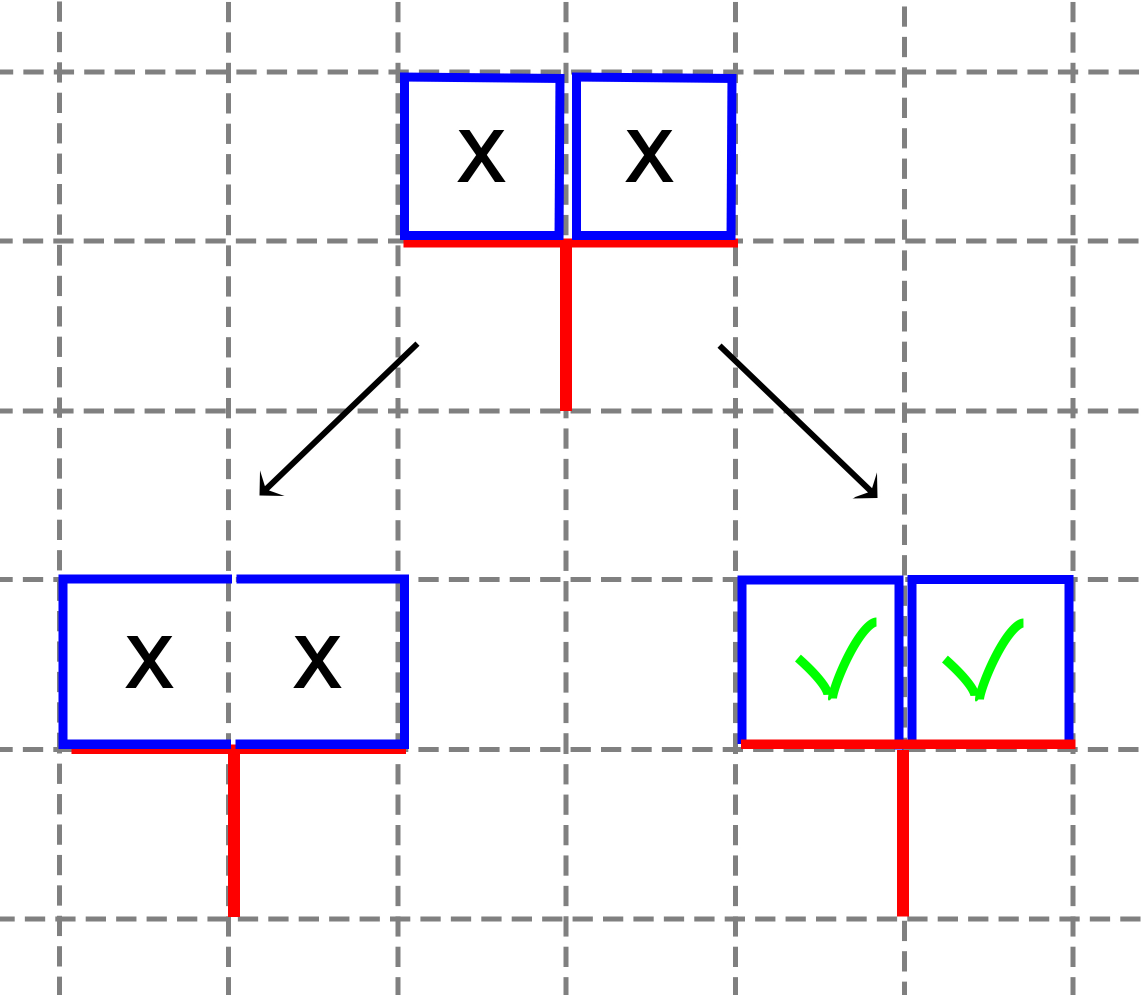}
\caption{Demonstration of a minimal change of boundary for the $d=2$ toric code. Starting with the modified vertex stabilizer at the top, we can see the right modification to the nearby plaquettes should result in fewer deletions than the modifications on the left. We use this as justification for condition 3 on a change of boundary map. }\label{minbd}
\end{figure}

We now intend to capture how the constraint space is altered by the inverse change of boundary $\beta^{-1}$, but both the image and pre-image under $\beta^{-1}$ of a constraint is not necessarily a constraint. We thus construct a constraint-space map $\gamma: \ker \phi' \to \ker \phi$ using $\beta^{-1}$ via the mapping 
\begin{align}
C' \mapsto \gamma(C')=\beta^{-1}[C'] + {\min_{\supp}}' \phi^{-1}\left[\{ \phi \beta^{-1} [C']\}\right].
\end{align}
The sequence of maps, $\phi^{-1}\left[\{ \phi \beta^{-1} [C']\}\right]$ deserves some explanation:  we first map $C' \in \ker\phi'$ into $\wp(S)$ via the image under the inverse change of boundary map $\beta^{-1}$, then map the resulting set to the operator it generates via $\phi$. As we have possibly changed support in going from $S'$ to $S$, it is possible $\beta^{-1}[C']$ does not map to the identity under $\phi$. We then apply the pre-image of that operator under $\phi$ to find all sets which generate it and choose the one with minimal support. There may be some ambiguity in which set has minimal support, but we assume in the general case the choice can be made. As any member $F \in \phi^{-1}\left[\{ \phi \beta^{-1} [C']\}\right]$  generates the same operator as $\beta^{-1} (C')$ under $\phi$, $F + \beta^{-1}[C] \in \ker \phi$ by linearity. We then choose the ``smallest'' $F$. The prime on the minimum denotes that we restrict the minimization to sets such that all members are unaltered by the change of boundary, i.e. sets $F$ such that $\beta[F] =F$. We prove this restriction is always possible in the following proposition.
\begin{proposition}
For all change of boundary $\beta: S \to S'$ and $C' \in \ker \phi'$, there exists an $F \in \phi^{-1}\left[\{ \phi \beta^{-1} [C']\}\right]$ such that $\beta[F] =F$. 
\end{proposition}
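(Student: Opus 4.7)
The plan is to show that $\phi(D) \in \mc G \cap \mc G'$, where $D := \beta^{-1}[C']$. Once that is established, condition 2 of the change-of-boundary definition immediately supplies an $F \subseteq S \cap S'$ with $\phi(F) = \phi(D)$, and since the unchanged stabilizers in $S \cap S'$ are fixed points of $\beta$, one has $\beta[F] = F$ as required. Because $D \subseteq S$, the inclusion $\phi(D) \in \mc G$ is immediate, so the entire burden is to show $\phi(D) \in \mc G'$.

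The bridge from $\mc G$ to $\mc G'$ comes from combining conditions 1 and 3. Condition 1 forces $\mc G'$ to be Lagrangian in the symplectic space $(\mc P, \lambda)$, i.e.\ $\mc G' = (\mc G')^\perp$ (trivial logical subspace gives $N$ independent generators and thus saturates the isotropic bound in $\dim \mc P = 2N$). It therefore suffices to show that $\phi(D)$ $\lambda$-commutes with every $s' \in S'$. Writing $s' = \beta(s)$ for the unique $s \in S$ and invoking the reformulation of condition 3 noted in the text, $\lambda(\beta(t), s) = \lambda(t, \beta(s))$, I would compute
\begin{align*}
\lambda(\phi(D), \beta(s)) = \sum_{t \in D} \lambda(t, \beta(s)) = \sum_{t \in D} \lambda(\beta(t), s) = \lambda(\phi'(\beta[D]), s) = \lambda(\phi'(C'), s) = 0,
\end{align*}
using bilinearity of $\lambda$, condition 3 in the middle equality, and $\beta[D] = C' \in \ker \phi'$ at the end. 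Hence $\phi(D) \in (\mc G')^\perp = \mc G'$, which combined with $\phi(D) \in \mc G$ finishes the argument by invoking condition 2 to extract $F$.

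The conceptual pivot, more than any technical obstacle, is recognizing that condition 3 in the form $\lambda(\beta(s_1), s_2) = \lambda(s_1, \beta(s_2))$ is exactly the statement of $\lambda$-self-adjointness of the change-of-boundary map: this is precisely what is needed to transport a pairing between $\phi(D)$ and an element of $S'$ into one between $\phi'(C')$ and an element of $S$, so that the kernel condition $\phi'(C') = 0$ can be used to kill it. Modulo this observation, the identification of $\mc G'$ with its own symplectic complement and the final appeal to condition 2 are essentially bookkeeping.
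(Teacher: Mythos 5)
Your proof is correct and follows essentially the same route as the paper's: the key computation transporting $\lambda(\phi(D),\beta(s))$ to $\lambda(\phi'(C'),s)$ via condition 3 is identical to the paper's Eq.~\eqref{eq:pf1}, and the final appeal to condition 2 to extract $F\in\wp(S\cap S')$ with $\beta[F]=F$ matches as well. The only difference is presentational — you make explicit the Lagrangian identity $\mc G'=(\mc G')^{\perp_\Lambda}$ that justifies passing from ``commutes with all of $S'$'' to ``lies in $\mc G'$'', whereas the paper simply cites the triviality of the logical subspace — but this is the same argument spelled out in slightly more detail.
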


\begin{proof}
Let $\beta: S \to S'$ be a change of boundary and $C' \in \ker \phi'$ and consider the following commutation with any $\beta(s_2) \in S'$:
\begin{align} \label{eq:pf1}
\lambda(\phi \beta^{-1} [C'], \beta(s_2)) =& \lambda\left( \prod_{s_1 \in \beta^{-1}[C']} s_1, \beta(s_2) \right)  = \sum_{s_1 \in \beta^{-1}[C']} \lambda( s_1, \beta(s_2) ) \nonumber \\
=& \sum_{s_1 \in \beta^{-1}[C']} \lambda(\beta(s_1), s_2) =\lambda\left( \prod_{s' \in C'} s', s_2 \right) =\lambda(\phi'(C'), s_2) =0,
\end{align}
where we have used the definition of $\phi$ and $\phi'$, linearity, and condition 3 on $\beta$. Thus $\phi \beta^{-1} [C']$ commutes with all $S'$ and since the logical subspace for $S'$ is trivial (again see definition in Section \ref{sec:lop}), this implies $\phi \beta^{-1} [C'] \in \mc G'$. Trivially, $\phi \beta^{-1} [C'] \in \mc G$. By condition 2 on $\beta$, this implies there exists a set $F \in \wp(S\cap S')$ such that $\phi \beta^{-1} [C'] = \phi(F) = \phi'(F)$. Therefore, $F \in \phi^{-1}\left[\{ \phi \beta^{-1} [C']\}\right]$ where $\beta[F] =F$. 
\end{proof}
 If $\beta^{-1}[C']$ is a constraint, then ${\min_{\supp}}' \phi^{-1}\left[\{ \phi \beta^{-1} [C']\}\right] = \emptyset$ and $\gamma(C') = \beta[C']$ as expected.
 
Constraints which can be generated via this map are call non-topological or {\it trivial} constraints (topologically trivial, as we shall argue they have other consequences). However, there may be some constraints which cannot be generated this way for any change of boundary.  Such constraints are members of the{ \it topological constraint space} defined as\footnote{It is possible that $\gamma$ is not linear due to our use of minimum support for selecting the ``counter term''. If it happens that $\gamma$ is not linear, we replace $\im \gamma$ with $\spn(\im \gamma)$, i.e. we complete the image so that it is a subspace of $\ker \phi$.}
\begin{align}
\mc T = \cap_\gamma \ker\phi / \im \gamma,
\end{align}
where the intersection over $\gamma$ is for  constraint space maps as induced by some change of boundary. By this definition, a topological constraint is really an equivalence class of constraints where all members are equivalent modulo some augmentation or deformation by (sum with) some trivial constraint. We reserve a deeper discussion of topological constraints for future work, however some discussion of their importance can be found in \cite{Schmitz2018}. 

For concreteness, we consider the constraints for the models discussed above. A more detailed discussion for these examples can be found in \cite{Schmitz2018}. The constraint space for the $d=2$ toric code is generated by the vertex constraint $\{a_v: \text{for all vertices }  v\}$ and the plaquette constraint $\{b_p: \text{ for all plaquettes } p\}$. It should be clear that these constraints are not present for the toric code with obc and so all constraints are topological. Going to $d=3$, we still have both these constraints, however there are more constraints among the plaquette stabilizers. The set of all plaquettes about any elementary cube forms a constraint which is clearly not topological. These cube constraints generate an entire space of trivial constraints which is present even with obc. Such a space of constraints can be characterized as all sets of plaquettes which form closed membranes (or terminate on a ``ragged'' boundary if it exists).  A constraint on the 3-torus that is not generated by these cubes is the set of plaquette operators which forms a plane wrapping around the system in any two of the three directions. This suggests the dimension of $\mc T_{\text{TC}_3}$ is four (one for the set of all vertices, and one for each wrapping plane), but this is incorrect by our definition. In obc, the sum over all cube constraints is exactly the sum of the three wrapping plane constraints in pbc, which is to say under $\gamma$, the sum of all cubes maps to the sum of the three wrapping planes and is therefore trivial. Thus $\dim \mc T_{\text{TC}_3}=3$ due to this linear dependence among the three wrapping plane constraints.

As discussed in \cite{Ma2017, Ma2018, Schmitz2018}, the X-cube model contains a sub-extensive number of topological constraints and an extensive number of trivial constraints. Every 2D layer of cubes in any of the three directions forms a topological constraint, however, these are not all independent. In particular, the sum of all layers perpendicular to any one direction (say $\hat e_1$) is equivalent to the same sum in any other direction (say $\hat e_2$). As for the vertex stabilizers, there is an extensive number of trivial constraints generated by the collection of sets, one for each vertex, where each set contains the three stabilizers associated to that vertex. But unlike the $d=3$ toric code, these vertex constraints are far more ``tame.'' Each plane of vertex stabilizers is then a topological constraint as they only exist on the 3-torus and cannot be deformed into one another by the trivial constraints. However, the sum of all such layers in all three directions is just the sum over all vertex stabilizers, which is a trivial constraint and does not  contribute to $\dim \mc T_{\text{XC}}$. Taking all this into account, we find that $\dim \mc T_{\text{XC}}= 6L-3$ for a three torus of linear size $L$. The sub-extensive scaling of this dimension is another common feature of fractonic models \cite{Ma2018, Schmitz2018}.

As for Haah's code, all constraints must be topological as simple counting tells us that $|S_{\text{Haah}}| = N$ and the logical subspace is non-trivial (see Section \ref{sec:lop}). But as this dimension is a complicated function of the linear system size \cite{Haah2011}, so too is the number of constraints.  We shall make some progress on understanding these constraint, but this is best done using the polynomial ring formalism \cite{Haah2013} which we introduce in Section \ref{sec:connecting}, so we delay the discussion until then.  

\subsection{Excitations}\label{sec:ex}

We typically think of excitations as states with energies above the ground space, but as we wish to remain Hamiltonian independent, we define them via a map $\psi: \mc P \to \wp(S)$ such that 
\begin{align}
f \mapsto \psi(f)= \{s \in S: \lambda(s,f)=1\}.
\end{align}
 We refer to this as the {\it syndrome map} and members of its image syndromes due to the connection with error syndromes in the context of error correction. Individual members of a syndrome are called fundamental excitations and subsets composite excitations. This map tells us which stabilizers ``flip'' (change eigenstate) under the action of $f$ on a member of the stabilizers code as defined in Eq.~\eqref{eq:code}, but it also has an important connection to $\phi$ and $\omega$: 

\begin{proposition}[Braiding Condition]\label{prop:stabgauge}
For all $A \in \wp(S)$ and $f \in \mc P$,
\begin{align}\label{eq:stabgauge}
\lambda(\phi(A), f) = \omega(A, \psi(f)).
\end{align}
\end{proposition}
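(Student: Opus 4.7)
The plan is to unpack both sides using the definitions already in hand and show that they reduce to the same mod-2 count via bilinearity of $\lambda$. The key observation is that the ``product'' $\phi(A)$, when viewed in the $\mb F_2$-vector-space picture of $\mc P$, is nothing other than the sum $\sum_{s \in A} s$, so $\lambda(\phi(A), f)$ should expand into a sum over $A$ of single-stabilizer commutation bits, each of which is exactly the indicator that $s \in \psi(f)$.

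First I would rewrite the left-hand side using linearity of $\lambda$ in its first argument:
\begin{align*}
\lambda(\phi(A), f) = \lambda\!\left(\sum_{s \in A} s,\; f\right) = \sum_{s \in A} \lambda(s,f),
\end{align*}
where the sums are mod $2$, and the first equality uses that $\phi(A) = \prod_{s\in A} s$ corresponds (modulo phase) to $\sum_{s \in A} s$ under the $\mc P \simeq \wp(Q)^2$ identification. Next I would unpack the right-hand side via the definition of $\omega$ and of the syndrome map:
\begin{align*}
\omega(A, \psi(f)) = |A \cap \psi(f)| \bmod 2 = \sum_{s \in A} \mathbf{1}[\lambda(s,f)=1] = \sum_{s \in A} \lambda(s,f),
\end{align*}
where in the last step I use that $\lambda(s,f) \in \{0,1\}$ is itself the indicator that $s \in \psi(f)$. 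Comparing the two expressions finishes the proof.

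There isn't really a serious obstacle here, since the statement is essentially a bilinearity/bookkeeping identity that makes the geometric content of the stabilizer formalism precise: commutation of a product of stabilizers with $f$ is determined entirely by the parity of the intersection of $A$ with the syndrome of $f$. The one subtle point to be explicit about is the step identifying $\phi(A)$ with a linear sum in $\mc P$, which is justified by the earlier observation that addition in the $\mb F_2$-vector-space $\mc P$ corresponds (up to a physically irrelevant phase) to multiplication of Pauli operators. Once that identification is invoked, bilinearity of $\lambda$ closes the argument in one line on each side.
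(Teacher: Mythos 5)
Your proof is correct and uses exactly the same ideas as the paper's: expand $\phi(A)$ as the mod-2 sum $\sum_{s\in A}s$, apply linearity of $\lambda$ to reduce the left side to $\sum_{s\in A}\lambda(s,f)$, and recognize this as the parity of $|A\cap\psi(f)|$ via the definition of the syndrome map. The only cosmetic difference is that you work inward from both sides to meet at $\sum_{s\in A}\lambda(s,f)$, whereas the paper runs the chain of equalities left-to-right.
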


\begin{proof}
We start with the left-hand side for which
\begin{align}
\lambda(\phi(A), f) = \lambda(\prod_{s\in A}s, f) = \sum_{s \in A} \lambda(s, f),
\end{align}
where we have used the definition of $\phi$ and the linearity of $\lambda$. For each term in the sum, $\lambda(s, f)=1$ if and only if $s\in \psi(f)$ and zero otherwise so that 
\begin{align}
\lambda(\phi(A), f) = \sum_{s \in A \cap \psi(f)} 1= |A \cap \psi(f)| = \omega(A, \psi(f)).
\end{align}
\end{proof} 

As we show in general, $\psi$ is the unique map with this property. We argue that this relation implicitly encodes all abelian braiding processes for the stabilizer code, hence we refer to it as the {\it braiding condition}. To see this, consider what a full braid means in this context. Starting with some stabilizer state, the entire braid is described by an operator whose action maintains the state up to a possibly different phase. In the stabilizer language, this implies the operator is a stabilizer group element in which case we can write the operator as $\phi(A)$ for some $A \in \wp(S)$. Furthermore, we can achieve any stabilizer eigenstate by acting with some Pauli operator $f \in \mc P$ on a member of the stabilizer code which we label as $\ket{\psi(0)} \in \mc H_{\text{code}}$, i.e. $\ket{\psi(f)} = f \ket{\psi(0)}$. So applying our full braid to this state,
\begin{align}
\phi(A) \ket{\psi(f)} =  \phi(A) f \ket{\psi(0)}= (-1)^{\lambda(\phi(A), f)} f \phi(A)\ket{\psi(0)} =(-1)^{\omega(A, \psi(f))} \ket{\psi(f)}.
\end{align}
where we have used Proposition~\ref{prop:stabgauge} and the definition of the stabilizers state $\ket{\psi(0)}$. So the left-hand side of Proposition~\ref{prop:stabgauge} describes the phase of the braid and the right-hand side describes the geometry, i.e. counts the number of excitations which are ``encircled'' by the braid (as defined by $A$). It may see strange to call this a braid, as braiding is typically thought of as a process of moving excitations around one another. $\phi(A)$ is often not a local operator, but it can always be decomposed into a product of local operators which when applied sequentially, has the appearance of moving excitations. It is important to note that our braiding condition does not define equivalence classes among members of $\im \phi$. Defining such equivalence classes would be an interesting topic of future work though a hint can be found in Ref.~\cite{Schmitz2018} and the discussion surrounding {\it non-local surface stabilizers} and emergent Gauss laws.
  
 Another important consequence of Proposition~\ref{prop:stabgauge} is the {\it Braiding Law for Excitations} or BrLE rules, which is stated as followings:
\begin{theorem}[Stabilizer BrLE Rules]\label{prop:stabcon}
$J$ is a syndrome i.e. $J \in \im\psi$ if and only if $ \omega(C, J)= 0$ for all constraints $C \in \ker \phi$. 
\end{theorem}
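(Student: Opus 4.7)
The forward direction is immediate from Proposition~\ref{prop:stabgauge}. If $J = \psi(f)$ and $C \in \ker\phi$, then
\begin{align}
\omega(C, J) = \omega(C, \psi(f)) = \lambda(\phi(C), f) = \lambda(0, f) = 0
\end{align}
by bilinearity, since $\phi(C) = 0$ in $\mc P$ (the identity operator is the additive zero in our $\mb F_2$-vector-space encoding). All the content therefore sits in the converse.

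For the converse I would dualize. Suppose $J \in \wp(S)$ satisfies $\omega(C, J) = 0$ for every $C \in \ker\phi$, and consider the linear functional $L_J : \wp(S) \to \mb F_2$ defined by $A \mapsto \omega(A, J)$. By hypothesis $L_J$ annihilates $\ker\phi$, so it descends to a well-defined linear functional $\bar L_J$ on the quotient $\wp(S)/\ker\phi$. Since $\phi$ induces the canonical isomorphism $\wp(S)/\ker\phi \cong \im \phi = \mc G$, we may regard $\bar L_J$ as a linear functional on the subspace $\mc G \subseteq \mc P$.

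The crux of the argument is to realize $\bar L_J$ as $g \mapsto \lambda(g, f)$ for some $f \in \mc P$. Because $\lambda$ is non-degenerate on the finite-dimensional space $\mc P$, the map $f \mapsto \lambda(\cdot, f)$ is an isomorphism $\mc P \to \mc P^*$. The restriction map $\mc P^* \to \mc G^*$ is surjective in finite dimensions (extend any functional on $\mc G$ by zero on a complementary subspace and pull back through $\lambda$). Composing these facts, there exists $f \in \mc P$ with $\lambda(g, f) = \bar L_J(g)$ for all $g \in \mc G$; equivalently, $\lambda(\phi(A), f) = \omega(A, J)$ for all $A \in \wp(S)$.

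Finally, Proposition~\ref{prop:stabgauge} converts the left-hand side into $\omega(A, \psi(f))$, yielding $\omega(A, \psi(f)) = \omega(A, J)$ for every $A \in \wp(S)$. The invertibility (non-degeneracy) of $\omega$ flagged just before Eq.~\eqref{eq:recon}, i.e.\ the reconstruction identity applied to $\psi(f) + J$, then forces $\psi(f) = J$, so $J \in \im\psi$. The step I would be most careful about is the extension argument in the third paragraph: one must use non-degeneracy of $\lambda$ on all of $\mc P$, not on $\mc G$, where in fact $\lambda$ is totally degenerate (every stabilizer commutes with every other, so $\mc G$ is isotropic). This is precisely why the extension is done through the ambient symplectic space rather than by trying to invert $\lambda|_{\mc G}$ directly, and it is also what underlies the uniqueness claim for $\psi$ alluded to in the sentence preceding the theorem.
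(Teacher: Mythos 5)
Your proof is correct, and after unfolding definitions it is essentially the same argument the paper gives for the general Theorem~\ref{th:fund}: descend $\omega(\cdot, J)$ to the quotient $\wp(S)/\ker\phi$ (which is exactly where the Brule's Rules hypothesis enters), transport it to a functional on $\mc G = \im\phi$ via the induced isomorphism, extend to $\mc P^*$, and pull back through the non-degenerate $\lambda$ to produce $f$ with $\psi(f) = J$. Where the paper packages these steps into an explicit right inverse $\tilde\psi = \Lambda^{-1}\tilde\phi^\star\Omega\,\iota_{\mc B}$ (the ambiguity in the choice of $\tilde\phi$ off $\im\phi$ being the counterpart of your choice of extension from $\mc G^*$ to $\mc P^*$), you argue existentially using finite-dimensional duality; and your closing caution about the isotropy of $\mc G$ under $\lambda$ correctly identifies the same point the paper handles by inverting $\Lambda$ on all of $\mc F$ rather than on any restriction to $\im\phi$.
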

Simply put, any pattern of excitations (errors) can be realized via the action of some Pauli operator $f$ if and only if that pattern overlaps with every constraint an even number of times. We prove Theorem~\ref{prop:stabcon} in full generality in Section \ref{sec:gauge}. It has been realized in the literature \cite{Terhal2015, wenbook} that one way is true ($\Rightarrow$) as it is trivial: since a constraint maps to the identity, its product must commute with everything and so the number of members which anti-commute with some Pauli operator must be even. We have not found the converse stated in full generality anywhere else, though a version for CSS codes is discussed in Ref.~\cite{Kubica2018}. The fact that this is both necessary and sufficient can be a very powerful tool as it requires that any general statement about the excitations must have a realization in the structure of the constraints. For example, excitations forming closed loops in the $d=3$ toric code is a direct consequence of the extensive number of trivial constraints. Any syndrome must contain an even number of plaquettes of any elementary cube. This forms a string-like composite excitation which must exit every volume it enters and thus forms a closed loop. Likewise, the fractonic behavior of the X-cube model is a consequence of any syndrome having to satisfy the BrLE rules for multiple intersecting planar constraints. As for Haah's code, the fact that Theorem \ref{prop:stabcon} is necessary and sufficient means that the fractal-like nature of the excitations (see Ref.\cite{Haah2011}) must be a consequence of the so-far undiscovered constraints and we argue this is the case in Section \ref{sec:connecting}. Similar to definitions in Ref. \cite{Song2018}, this suggests that fractonic behavior is a consequences of excitations having to satisfy the BrLE rules for multiple {\it intersecting} constraints, whereby moving a fracton in any direction violates the BrLE rules and is thus not allowed. We plan to make this notion more precise in future work.

It is clear that the BrLE rules represent the statement of conservation of $\mb F_2$ (or $\mb Z_2$) charge. Thus we can loosely define topological charge conservation as those conservation rules which are enforced by the topological constraints. As we argue, this notion of conservation of charge is a general consequence of Eq. \eqref{eq:stabgauge} and the gauge structure given by $\phi$, $\lambda$ and $\omega$.

\subsection{Logical Operators}\label{sec:lop}

When constraints are present, it is often the case that there are too few independent stabilizers to form exactly one half of a canonical basis. Thus there are more operators in $\mc P$ which commute with all of $\mc G$. With our maps $\psi$ and $\phi$, we can conveniently capture all such operators in this {\it logical subspace} which we define as
\begin{align}
\mc H(S) = \ker \psi / \mc G = \ker \psi / \im \phi.
\end{align}
A member of $\mc H(S)$ is really an equivalence class of operators, each of which is called a logical operator as they collectively form a Pauli sub-algebra for information stored in $\mc H_{\text{code}}$ \cite{Terhal2015}. This immediately implies $\log_2 \dim \mc H_{\text{code}} =\hlf \dim \mc H(S)$ as only that many independent operators in $\mc H(S)$ can mutually commute. This space also defines the {\it code distance} as $d = | \min_\supp \mc H(S)|$, i.e. the size of the support of the smallest logical operator. Just as with excitations, logical operators can be related to constraints, or more specifically to topological constraints.

\begin{theorem}\label{thm:log}
 For every member $C \in \mc T$\footnote{What we really mean is a member of a member of $\mc T$, but for clarity, we continue to use this notation. Equality between two topological constraints then means they add to a trivial constraint. Likewise, $p_C \in \mc H(S)$ really means a member of a member of the logical subspace and equality of two such operators mean their sum is in $\im \phi$.} there exists a logical operator $p_C \in \mc H(S)$. Furthermore, for all $C_1, C_2 \in \mc T$, the logical operators can be chosen such that $p_{C_1}$ commutes with $p_{C_2}$ (i.e. they are all mutually commuting) and $p_{C_1} = p_{C_2}$ if and only if $C_1 = C_2$.  
\end{theorem}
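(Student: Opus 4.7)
The plan is to fix a change of boundary $\beta : S \to S'$ and exploit the triviality of the logical subspace of $S'$ as a topology-free reference, mapping each representative $C \in \ker \phi$ to
\begin{align*}
p_C := \phi'(\beta[C]).
\end{align*}
By construction $p_C \in \mc G' = \im \phi'$. To prove the theorem I then need to verify three things: (i) $p_C \in \ker \psi$, so that $[p_C]$ is a class in $\mc H(S)$; (ii) the induced assignment descends to an injective map on $\mc T$; and (iii) the family $\{p_C\}$ is mutually commuting.

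Item (i) is a direct commutator computation: for any $s_2 \in S$, bilinearity of $\lambda$, condition 3 of $\beta$ in its rearranged form $\lambda(\beta(s_1), s_2) = \lambda(s_1, \beta(s_2))$, and the defining property $\phi(C) = \id$ combine to give
\begin{align*}
\lambda(p_C, s_2) = \sum_{s_1 \in C} \lambda(\beta(s_1), s_2) = \sum_{s_1 \in C} \lambda(s_1, \beta(s_2)) = \lambda(\phi(C), \beta(s_2)) = 0.
\end{align*}
Item (iii) comes essentially for free, since every $p_C$ lies in the commutative subspace $\mc G'$.

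The main obstacle is (ii). By linearity of $C \mapsto p_C$ it reduces to the equivalence: $p_C \in \im \phi$ if and only if $C \in \im \gamma$. The ``trivial-produces-trivial'' direction uses the explicit form $\gamma(C') = \beta^{-1}[C'] + F_{\min}$ together with the fixed-set restriction $\beta[F_{\min}] = F_{\min}$, which, by the preceding proposition, can be chosen with $F_{\min} \subseteq S \cap S'$; then $\phi'(\beta[\gamma(C')]) = \phi'(C') + \phi(F_{\min}) = \phi(F_{\min}) \in \mc G$. The converse is the delicate step. Assuming $p_C \in \mc G \cap \mc G'$, condition 2 of $\beta$ supplies $F \in \wp(S \cap S')$ with $\phi(F) = p_C$, so that $C' := \beta[C] + F$ lies in $\ker \phi'$; a direct computation of $\gamma(C')$ expresses $C$ as $\gamma(C') + (F + F_{\min})$, and the residual $F + F_{\min} \subseteq S \cap S'$ is itself a constraint that equals its own image under $\gamma$ (since $\phi(F + F_{\min}) = \id$ forces the minimum-support counterterm to be empty) and hence lies in $\im \gamma$. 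Thus $C \in \im \gamma$, giving injectivity. The genuine difficulty throughout is the non-linearity of $\gamma$ (introduced by the minimum-support selection) and the need to track which subsets are fixed by $\beta$; condition 2, $\mc G \cap \mc G' = \langle S \cap S' \rangle$, is precisely the structural input that keeps this bookkeeping consistent.
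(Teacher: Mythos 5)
Your proof is correct and follows essentially the same route as the paper's two-lemma argument: the assignment $p_C = \phi'\beta[C]$, the bilinearity/condition-3 computation showing $p_C\in\ker\psi$, mutual commutativity from $\mc G'$, and the equivalence $p_C\in\im\phi$ if and only if $C$ is trivial. The only cosmetic difference is in closing the $(\Leftarrow)$ direction of that equivalence, where the paper observes that one may choose $F=F_{C'}$ outright so that $\gamma(C')=C$ exactly, whereas you show the residual $F+F_{\min}$ lies in $\im\gamma$ (its minimum-support counterterm being empty) and appeal to the $\spn(\im\gamma)$ convention from the paper's footnote---an equally valid piece of bookkeeping.
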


This is essentially proven in two lemmas.

\begin{lemma}\label{lem2}
For all $C \in \ker \phi$ and all changes of boundary $\beta: S \to S'$, $\phi' \beta[C] \in \ker \psi$.
\end{lemma}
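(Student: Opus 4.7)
The plan is to show directly that $\phi'\beta[C]$ commutes with every stabilizer in $S$, which is precisely the condition $\phi'\beta[C] \in \ker\psi$ since $\ker\psi = \{f \in \mc P : \lambda(s,f)=0 \text{ for all } s\in S\}$. The strategy mirrors the calculation in Eq.~\eqref{eq:pf1}, but with the roles of $S$ and $S'$ swapped: there we moved a constraint of $S'$ back to $S$ via $\beta^{-1}$ to show it generated an element of $\mc G'$; here we push a constraint of $S$ forward to $S'$ via $\beta$ to show the resulting operator commutes with all of $S$.

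First I would unpack definitions. By definition of $\phi'$ and $\beta[C]$, we have $\phi'\beta[C] = \prod_{s_1 \in C} \beta(s_1)$ up to phase. Then for any $s_2 \in S$, bilinearity of $\lambda$ gives
\begin{align*}
\lambda(\phi'\beta[C], s_2) = \sum_{s_1 \in C} \lambda(\beta(s_1), s_2).
\end{align*}

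The key step is to apply condition 3 on $\beta$ in the rewritten form $\lambda(\beta(s_1), s_2) = \lambda(s_1, \beta(s_2))$ (which the text derives from the original symmetric-difference formulation using linearity and the fact that both $S$ and $S'$ are stabilizer sets). Substituting this inside the sum and pulling the product back out via linearity of $\lambda$ yields
\begin{align*}
\lambda(\phi'\beta[C], s_2) = \sum_{s_1 \in C} \lambda(s_1, \beta(s_2)) = \lambda(\phi(C), \beta(s_2)).
\end{align*}
Since $C \in \ker\phi$, we have $\phi(C) \propto \id_{\mc H}$, so this last quantity is $0$. As $s_2 \in S$ was arbitrary, $\psi(\phi'\beta[C]) = \emptyset$, i.e.\ $\phi'\beta[C] \in \ker\psi$.

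The only subtlety, and the place where one must be careful, is the legitimacy of swapping $\beta$ from one argument to the other via condition 3. Everything else is bookkeeping. Once that identity is in hand, the proof collapses to two applications of bilinearity sandwiching a single symmetry move, so I do not expect any genuine obstacle.
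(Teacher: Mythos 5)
Your proof is correct and follows essentially the same route as the paper's: expand $\phi'\beta[C]$ as a product, use bilinearity of $\lambda$, apply condition 3 in the form $\lambda(\beta(s_1),s_2)=\lambda(s_1,\beta(s_2))$ to move $\beta$ to the other slot, and collapse back to $\lambda(\phi(C),\beta(s_2))=0$. No meaningful differences.
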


\begin{proof}
Let $C \in \ker \phi$ and $\beta: S \to S'$ be a change of boundary. Following the same method as Eq. \eqref{eq:pf1}, consider the commutation with any $s_2 \in S$, 
\begin{align} \label{eq:pf2}
\lambda(\phi' \beta[C], s_2) =& \lambda\left( \prod_{s_1 \in C} \beta(s_1), s_2 \right)  = \sum_{s_1 \in C} \lambda( \beta(s_1), s_2 ) =  \sum_{s_1 \in C} \lambda(s_1, \beta(s_2))\nonumber \\
 =&\lambda\left( \prod_{s_1 \in C} s_1, \beta(s_2) \right) =\lambda(\phi(C), \beta(s_2)) =0.
\end{align}
Therefore by definition, $\phi'\beta[C] \in \ker \psi$.  
\end{proof}

\begin{lemma} \label{lem3}
For all $C \in \ker \phi$ and all changes of boundary $\beta: S \to S'$, $C$ is trivial if and only if $\phi' \beta[C] \in \im \phi$.
\end{lemma}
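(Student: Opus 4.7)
The plan is to prove both directions by direct computation, using the three defining conditions on the change of boundary $\beta$ together with Proposition 1, and the observation that any subset $F\subset S$ with $\beta[F]=F$ as subsets must lie in $S\cap S'$ (so in particular $\phi(F)=\phi'(F)$).

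For the forward direction, assume $C=\gamma(C')=\beta^{-1}[C']+F_{C'}$ for some $C'\in\ker\phi'$, where $F_{C'}$ is the minimum-support counter term with $\beta[F_{C'}]=F_{C'}$ and $\phi(F_{C'})=\phi\beta^{-1}[C']$. Applying $\phi'\beta$ and using linearity of each map,
\begin{align*}
\phi'\beta[C]=\phi'[C']+\phi'[F_{C'}]=0+\phi(F_{C'})\in\im\phi,
\end{align*}
where $\phi'[C']=0$ because $C'\in\ker\phi'$, and $\phi'[F_{C'}]=\phi[F_{C'}]$ because $F_{C'}\subset S\cap S'$. The case $C\in\spn(\im\gamma)$ follows termwise.

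For the backward direction, I would observe that $\phi'\beta[C]$ is manifestly a product of elements of $S'$, hence lies in $\mc G'$, so combined with the hypothesis $\phi'\beta[C]\in\im\phi=\mc G$ we have $\phi'\beta[C]\in\mc G\cap\mc G'$. Condition 2 on $\beta$ then produces an $F\subset S\cap S'$ with $\phi(F)=\phi'(F)=\phi'\beta[C]$. Setting $C'=\beta[C]+F$, one checks $\phi'(C')=\phi'\beta[C]+\phi'(F)=0$, so $C'\in\ker\phi'$. Expanding the definition of $\gamma$ yields $\beta^{-1}[C']=C+F$ and $\phi\beta^{-1}[C']=\phi(F)$, so $F$ is a viable $\beta$-invariant element of $\phi^{-1}[\{\phi\beta^{-1}[C']\}]$, and one obtains
\begin{align*}
\gamma(C')=C+F+F_{C'}.
\end{align*}

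The main obstacle is that the minimum-support selection may produce $F_{C'}\neq F$, so $C$ is not directly $\gamma(C')$ but differs by the residual $R=F+F_{C'}$. The remedy is to show that $R$ is itself trivial: take $C''=R$, which lies in $\ker\phi'$ since $R$ is $\beta$-invariant and $\phi'(R)=\phi(R)=0$. Then $\beta^{-1}[C'']=R$ and $\phi\beta^{-1}[C'']=0$, so the minimum-support counter term $F_{C''}$ can be taken to be $\emptyset$ and $\gamma(C'')=R$. Hence $C=\gamma(C')+\gamma(C'')\in\spn(\im\gamma)$, i.e., $C$ is trivial.
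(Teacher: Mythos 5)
Your proof is correct, and the forward direction is essentially identical to the paper's. The backward direction sets up in the same way as the paper — use condition 2 to produce an $F\in\wp(S\cap S')$ with $\phi(F)=\phi'(F)=\phi'\beta[C]$, form $C'=\beta[C]+F\in\ker\phi'$, and expand $\gamma(C')=C+F+F_{C'}$ — but then you and the paper part ways in how the residual is killed. The paper argues that the set $\phi^{-1}\left[\{\phi\beta^{-1}[C']\}\right]$ being minimized over is independent of the choice of $F$ (it always equals $\phi^{-1}\left[\{\phi'\beta[C]\}\right]$), so $F_{C'}$ is a fixed, valid candidate for $F$ and one may choose $F=F_{C'}$ from the start, giving $\gamma(C')=C$ exactly. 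You instead keep an arbitrary $F$, observe the leftover $R=F+F_{C'}$ is a $\beta$-invariant element of $\ker\phi\cap\ker\phi'$, and check that $\gamma$ applied to $C''=R$ returns $R$ (since $\phi\beta^{-1}[R]=0$ forces $F_{C''}=\emptyset$), so $C=\gamma(C')+\gamma(C'')\in\spn(\im\gamma)$. Both arguments are sound; yours avoids the mild circularity of the paper's choose-$F$-to-match-$F_{C'}$ step (where $F_{C'}$ is nominally defined in terms of $C'$, which depends on $F$) at the cost of explicitly invoking the $\spn(\im\gamma)$ completion that the paper only mentions in a footnote. One small wording nitpick: where you say the counter term $F_{C''}$ ``can be taken to be $\emptyset$,'' it is in fact forced to be $\emptyset$, since the empty set is the unique element of minimal support in $\phi^{-1}[\{0\}]$.
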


\begin{proof}
Let $C \in \ker \phi$ and $\beta: S \to S'$ be any change of boundary. 

($\Rightarrow$) Suppose $C$ is trivial. This implies there exists a $C' \in \ker \phi'$ such that $\gamma(C')= C$ for $\gamma: \ker \phi' \to \ker \phi$  associated with $\beta$. Using linearity of $\phi'$ and $\beta$ and the definition of $\gamma$, this implies that 
\begin{align}
\phi' \beta[C]= \phi'\beta[\gamma(C')]= \phi'(C') + \phi' \beta\left[ {\min_{\supp}}' \phi^{-1}\left[\{ \phi \beta^{-1} [C']\}\right] \right] = \phi' \beta[F_{C'}],
\end{align}
where $F_{C'} ={\min_{\supp}}' \phi^{-1}\left[\{ \phi \beta^{-1} [C']\}\right] $. By the restriction on the minimization, $\beta[F_{C'}]= F_{C'}$. Therefore, $\phi' \beta[C] = \phi(F_{C'}) \in \im \phi$.

($\Leftarrow$) Now suppose  $\phi' \beta[C] \in \im \phi = \mc G$. Trivially, $\phi' \beta[C] \in \im \phi' = \mc G'$. So by condition 2 on $\beta$, there exists an $F \in \wp(S\cap S')$ (i.e. $\beta[F]=F$) such that $\phi' \beta[C] = \phi'(F)$. This implies that $C'=\beta[C] + F \in \ker \phi'$. Consider this constraint under the action of $\gamma$,
\begin{align}
\gamma(C') = \beta^{-1}[\beta[C] + F] + F_{C'} = C + F + F_{C'},
\end{align}
where we again define $F_{C'} ={\min_{\supp}}' \phi^{-1}\left[\{ \phi \beta^{-1} [C']\}\right] $. It is clear that $F \in  \phi^{-1}\left[\{ \phi \beta^{-1} [C']\}\right] =  \phi^{-1}\left[\{ \phi(F)\}\right]$ and $\beta[F]=F$ so that it satisfies the restriction on the minimization. Also note the set $ \phi^{-1}\left[\{ \phi(F)\}\right] = \phi^{-1}\left[\{ \phi' \beta[C]\}\right]$ is fixed (and well-defined by the hypothesis) regardless of our choice of $F$, so without loss of generality, we can choose $F= F_{C'}$. Therefore, $\gamma[C'] = C$ which implies $C$ is trivial.
\end{proof}

We now prove Theorem \ref{thm:log}.

\begin{proof}
Let $C\in \mc T$ and consider any change of boundary $\beta: S \to S'$. Let $p_C = \phi' \beta[C]$. By Lemma \ref{lem2}, we know that $p_C \in \ker \psi$, and by Lemma \ref{lem3}, we have $C \neq \emptyset$ if and only if $p_C \notin \im \phi$. therefore $p_C \in \ker\psi/ \im\phi = \mc H(S)$. Furthermore, since $S'$ is a proper stabilizer set, $\lambda(p_{C_1}, p_{C_2}) = \lambda(\phi' \beta[C_1], \phi'\beta[C_2]) =0$, for all $C_1, C_2 \in \mc T$. As for the final claim, note  $p_{C_1} = p_{C_2}$ if and only if $p_{C_1} + p_{C_2} = \phi'\beta\left(C_1+C_2\right)\in \im \phi$. By Lemma \ref{lem3} and $C_1 + C_2 \in \ker \phi$, this is true if and only if $C_1+C_2$ is trivial. Therefore, $p_{C_1} = p_{C_2}$ if and only if $C_1 =C_2$.
\end{proof}

This completes the proof of ideas first conjectured in Ref.~\cite{Schmitz2018} and can be summarized as logical operators form in the intersection of a topological constraint with the boundary. Using condition 1 for a change of boundary, we see that $\dim \ker \phi' =|S| - \dim \mc G' = |S|-N$. Using this, we also find that $\dim \mc T = \dim \ker \phi - \dim \ker \phi' = \dim \ker \phi- (|S| -N)$, where, importantly, we maintain the condition $|S| \geq N$. Using $\dim \mc G = |S|- \dim \ker \phi$, we finally show that
\begin{align}
\log \dim \mc H_{\text{code}}= N-\dim \mc G= N -|S| + \dim \ker \phi = \dim \ker \mc T.
\end{align}

 So when the condition $|S|\geq N$ holds, Theorem~\ref{thm:log} implies $\{p_C\}_{C \in \mc T}$ forms exactly one maximal, mutually-commuting half of $\mc H(S)$ and can be used to define logical qubits by associating $p_C$ with a logical Z operator.  

To summarize how to practically use this result, we consider any topological constraint and some change of boundary. The members of this constraint map non-trivially under the change of boundary, and Theorem \ref{thm:log} assures us their product is in the logical subspace for our original stabilizer code. The definitions of a change of boundary and a topological constraint may seem too technical to be useful, but they are just the rigorous generalization of intuitive ideas. Furthermore, there is no real cost to assuming a constraint is topological so long as one can verify the resulting operator is logical. Theorem \ref{thm:log} is easily demonstrated in the $d=2$ toric code, where we can generate logical string operators via a change of boundary from the torus to the cylinder where the strings are formed on the cylindrical edges. We can also see that the equivalence class of logical operators is a reflection of the equivalence class of topological constraints as well as multiple valid changes of boundary. Moreover, we can go further to use this result. For example, suppose we are given a stabilizer code in three dimensions. If any of the topological constraints are formed by two-dimensional manifolds, then logical operators form one-dimensional strings at the boundary and we can immediately imply a bound on the code distance of $d \lesssim L$, where $L$ is the linear size of the system. This also suggests that any logical operator can be ``deformed'' (via multiplication by some set of stabilizers) to a boundary, even in the case of more exotic logical operators such as those for Haah's code which are typically characterized as fractal operators in the bulk. This result, along with the BrLE rules can be useful for engineering stabilizer codes with certain desirable properties.

\section{Linear Gauge Structures}\label{sec:gauge}

We have demonstrated much of the power and utility of the stabilizer formalism, but the structures presented above are not unique to stabilizer codes. In this section, we abstract the most interesting results to the general case, including continuum theories. Once again, we purposely avoid any discussion of Hamiltonians or Lagrangians and focus on the general mathematical structure. We are also  intentionally avoiding the term `gauge theory' to avoid confusion with Yang-Mills gauge theory which is not encapsulated by what follows. However, we hope to generalize to non-linear gauge structures in the future. To the best of our knowledge, this approach is novel to this paper.

\subsection{Definition of a Linear Gauge Structure}

Consider two vector spaces, $\mc A$ and $\mc F$, both over some field $\mb F$. We refer to $\mc A$ as the {\it potential space} and $\mc F$ as the {\it field space}. Each is equipped with a  linear, non-degenerate form $\Omega: \mc A \to \mc A^*$ and $\Lambda: \mc F \to \mc F^*$ where in general $\mc V^*$ is the dual space of $\mc V$ or more formally 
\begin{align}
\mc V^* = \{ \left(f:\mc V \to \mb F\right) : f \text { is (anti-)linear and bounded}\}.
\end{align}
Non-degeneracy of $\Omega$ requires that for any $A \in \mc A$,  $\Omega (A) =0^*$ if and only if $A=0$, where $0^*$ is the zero map in $\mc A^*$. We require the same holds for $\Lambda$. Non-degeneracy implies that $\Omega$ and $\Lambda$ are  injective. If they are also surjective, then we say they are invertible (as briefly introduced at the end of Section \ref{sec:stab}).

Let $\phi :\mc A \to \mc F$ and $\psi: \mc F \to \mc A$ be some linear maps. We define a linear gauge structure as follows:

\begin{definition}
An $\mb F$-linear gauge structure $GS =\left(( \mc A, \Omega, \phi), (\mc F, \Lambda, \psi)\right)$ satisfies the following:
\begin{enumerate}
\item $\Lambda$ is invertible, and
\item \begin{align}\label{eq:def1}
\phi^\star \Lambda = \Omega \psi,
\end{align}
or the following diagram commutes \footnote{To understand this diagram, we recall that the pullback is an involution for which $\phi^\star: \mc F^* \to \mc A^*$ such that $f \mapsto f\phi$.}:
\begin{equation}
\begin{tikzcd}
\mc F \arrow[r, "\psi"] \arrow[d, "\Lambda"'] & \mc A \arrow[d, "\Omega"] \\
\mc F^* \arrow[r, "\phi^\star"'] & \mc A^*
\end{tikzcd}
\end{equation}
\end{enumerate}
\end{definition} 

We define a linear gauge structure at this level of abstraction in the hopes that we can generalize to non-linear structures. However, we can put \eqref{eq:def1} into the form familiar from Eq. \eqref{eq:stabgauge} by defining $\omega(A,B) =\left( \Omega(B) \right)(A)$ and $\lambda(F, G) = \left(\Lambda(G)\right) (F)$ with which our definition can be written for all $A \in \mc A$ and $F\in \mc F$ as
\begin{align}\label{eq:def2}
\lambda(\phi(A), F) = \omega(A, \psi(F)).
\end{align}
Put this way, we recognize $\psi$ as a generalized adjoint of $\phi$. Also as a consequence of non-degeneracy, $\psi$ is unique to $\phi$, though existence is not guaranteed unless $\Omega$ is also invertible. In all cases considered here, $\Omega$ is invertible and the existence of $\psi$ is guaranteed and given by $\psi= \Omega^{-1} \phi^\star\Lambda$.

\subsection{Interpretation and the Generalized BrLE Rules}

The properties we intend to capture in defining a linear gauge structure were all demonstrated with stabilizers. The primary feature of a gauge theory is that some part of the space of all possibilities has no observable effect and is thus unphysical. A gauge structure analogously encodes this in the map $\phi$, where the kernel is unphysical. Then presumably, $\phi(A)$ is measurable and physical. We interpret this as the physical field. In the same way, $\psi(F)$ ($F$ not necessarily in the image of $\phi$) is another real physical quantity, which we interpret as the charges or currents. We naturally want to relate these measurable quantities to one another and this is provided by the generalized BrLE rules.

\begin{theorem}[Generalized BrLE Rules] \label{th:fund}
For any $\mb F$-linear gauge structure \break $GS =\left(( \mc A, \Omega, \phi), (\mc F, \Lambda, \psi)\right)$,  
\begin{subequations}\label{eq:conss}
\begin{align}
(\ker \phi)^{\perp_\Omega} =& \im \psi, \label{eq:cons}\\
(\im \phi)^{\perp_\Lambda} =& \ker \psi \label{eq:consalt},
\end{align}
\end{subequations}
where $(\ker \phi)^{\perp_\Omega}$ is the subspace $\mc B$ for which $\ker \phi$ is the degenerate subspace of $\Omega \,\iota_{\mc B}$ and likewise for $(\im \phi)^{\perp_\Lambda}$.
\end{theorem}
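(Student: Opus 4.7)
The plan is to translate the two $\perp$-conditions into ordinary annihilator conditions in terms of $\omega$ and $\lambda$, and then let the defining identity $\phi^\star \Lambda = \Omega \psi$ do the work, using the invertibility of $\Lambda$ and the injectivity (from non-degeneracy) of $\Omega$ only where strictly needed. Unfolding the definitions gives $(\ker\phi)^{\perp_\Omega} = \{B\in\mc A : \omega(A,B)=0 \text{ for all } A\in\ker\phi\}$ and $(\im\phi)^{\perp_\Lambda} = \{G\in\mc F : \lambda(\phi(A),G)=0 \text{ for all } A\in\mc A\}$, after which each claim becomes a statement about the four maps alone.

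I would handle \eqref{eq:consalt} first, since it requires only non-degeneracy of $\Omega$. The condition $G \in (\im\phi)^{\perp_\Lambda}$ says $(\Lambda(G))(\phi(A)) = 0$ for every $A\in\mc A$, i.e.\ $\phi^\star\Lambda(G)=0^*$. The defining identity rewrites this as $\Omega(\psi(G))=0^*$, and injectivity of $\Omega$ forces $\psi(G)=0$. Each step is reversible, so this is an equality.

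For \eqref{eq:cons} the inclusion $\im\psi \subseteq (\ker\phi)^{\perp_\Omega}$ is immediate from Eq.~\eqref{eq:def2}: if $B=\psi(F)$ and $A\in\ker\phi$ then $\omega(A,\psi(F))=\lambda(\phi(A),F)=\lambda(0,F)=0$. The reverse inclusion is where the real content lies. Given $B \in (\ker\phi)^{\perp_\Omega}$, the functional $\Omega(B)\in\mc A^*$ annihilates $\ker\phi$, so by the universal property of the quotient it factors through $\mc A/\ker\phi \cong \im\phi$, yielding some $\bar f:\im\phi\to\mb F$ with $\bar f\circ\phi = \Omega(B)$. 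Extending $\bar f$ to a functional $f\in\mc F^*$ (via any complement of $\im\phi$ in $\mc F$) gives $\phi^\star(f)=\Omega(B)$. Now I invoke the invertibility of $\Lambda$: surjectivity supplies $F\in\mc F$ with $\Lambda(F)=f$, and then the defining identity gives $\Omega(\psi(F)) = \phi^\star\Lambda(F) = \phi^\star(f) = \Omega(B)$; injectivity of $\Omega$ closes the argument as $\psi(F)=B$, so $B\in\im\psi$.

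The genuine obstacle is this last inclusion, where both hypotheses of a gauge structure are spent at once: surjectivity of $\Lambda$ is what lets an abstract functional $f\in\mc F^*$ be realized as $\Lambda(F)$ for some $F\in\mc F$, and the injective part of the non-degeneracy of $\Omega$ is what promotes the equality $\Omega(\psi(F))=\Omega(B)$ in $\mc A^*$ to the equality $\psi(F)=B$ in $\mc A$. Dropping either hypothesis would collapse the theorem to the trivial inclusion $\im\psi \subseteq (\ker\phi)^{\perp_\Omega}$, which is precisely why invertibility of $\Lambda$, and not merely non-degeneracy, is built into the definition of an $\mb F$-linear gauge structure.
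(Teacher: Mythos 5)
Your proof is correct and follows essentially the same route as the paper's: both establish the easy inclusions directly from the defining identity $\phi^\star\Lambda=\Omega\psi$, and both prove $(\ker\phi)^{\perp_\Omega}\subseteq\im\psi$ by factoring $\Omega(B)$ through $\mc A/\ker\phi$, transporting the resulting functional to $\mc F^*$ (you do this by extending $\bar f$ off $\im\phi$; the paper packages the same non-canonical choice as a partial right inverse $\tilde\phi$ and its pullback $\tilde\phi^\star$), and then using surjectivity of $\Lambda$ together with injectivity of $\Omega$. The only difference is presentational: the paper constructs a global section $\tilde\psi=\Lambda^{-1}\tilde\phi^\star\Omega\,\iota_{\mc B}$ and verifies $\psi\tilde\psi=\iota_{\mc B}$ (which it then reuses in the corollary), whereas you verify the surjectivity pointwise, one $B$ at a time.
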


In simpler terms $(\ker \phi)^{\perp_\Omega}$ is the set of all vectors $B \in \mc A$ such that $\omega(C,B) =0$ for all $C \in \ker \phi$ and $(\im \phi)^{\perp_\Lambda}$ is the set of all vectors $G \in \mc F$ such that $\lambda(F,G) =0$ for all $F \in \im \phi$ (again the more abstract definition in the statement of Theorem \ref{th:fund} is used for future generalization to non-linear structures). The first statement is physically the most important. It states that physical, measurable charges/currents are ``orthogonal'' to non-physical pure gauges. We interpret this as a generalized conservation of charge/current. The second is not so clear, but it is true none-the-less.

\begin{proof}
We prove Eq. \eqref{eq:cons} by showing inclusion one way and then the other.

 ($\supseteq$) Consider any $B \in \im \psi$, which implies there exists an $F \in \mc F$ such that $B= \psi (F)$. Then for any $C \in \ker \phi$, we have $\omega(C, B)= \omega(C, \psi(F)) = \lambda(\phi(C), F) =0$. Thus $B \in (\ker \phi)^{\perp_\Omega}$ and $\im \psi \subseteq (\ker \phi)^{\perp_\Omega}$.

($\subseteq$) Let $\mc B= (\ker \phi)^{\perp_\Omega}$ and $\tilde \phi:\mc F \to \mc A/\ker \phi$ be any linear map such that $\tilde \phi \phi = i_{\mc A/\ker \phi}$ for the quotient map $i_{\mc A/\ker\phi} :\mc A \to \mc A/\ker \phi$ such that $A \mapsto A +\ker \phi$. Note for all $B \in \mc B$ and all $A+\ker \phi \in \mc A/\ker \phi$,
\begin{equation}\label{eq:proof1}
\omega(A+ \ker \phi, B) = \omega(A, B) + \omega(\ker \phi ,B) = \omega(A,B). 
\end{equation}
Thus $ \tilde \Omega : \mc B^* \to \left(\mc A/\ker \phi\right)^*$ is well-defined by $\Omega$ using Eq.~\eqref{eq:proof1}, which implies $i_{\mc A/ \ker \phi}^\star \tilde \Omega = \Omega \iota_{\mc B}$. Now consider the map $\tilde \psi = \Lambda^{-1} \tilde \phi ^\star \tilde\Omega: \mc B \to \mc F$ where we recall that $\Lambda^{-1}:\mc F^* \to \mc F$ exists by the first condition on GS. We then use Eq. \ref{eq:def1} to show that
\begin{align}
\Omega \psi \tilde \psi= \phi^\star \Lambda \tilde \psi =\phi^\star \Lambda \Lambda^{-1} \tilde \phi^\star \tilde\Omega =(\tilde \phi \phi)^\star \tilde \Omega= i_{\mc A/ \ker \phi}^\star \tilde \Omega  = \Omega \iota_{\mc B}.
\end{align}
Let $B \in \mc B$ and $A \in \mc A$ and apply the above equality to show that
\begin{align}
0 =\omega(A, \psi \tilde \psi(B))- \omega(A, B)= \omega(A, \psi \tilde \psi (B) -B) .
\end{align}
As $\Omega$ in non-degenerate, this implies $\psi \tilde \psi (B) =B$ which is to say $\psi \tilde \psi = \iota_{\mc B}$. Therefore, for all $ B \in \mc B = (\ker \phi)^{\perp^\Omega}$ there exists an $F \in \mc F$ such that $\psi(F)= B$, namely $F = \tilde \psi(B)$, which is to say $ (\ker \phi)^{\perp^\Omega} \subseteq \im \psi$. 

We now prove Eq. \eqref{eq:consalt} by showing inclusion one way and then the other.

($\supseteq$) Let $G \in \ker \psi$ and $ F \in \im \phi$, which implies $\psi(G)=0$ and $F=\phi(A)$ for some $A \in \mc A$. Then we have $\lambda(F, G)= \lambda(\phi(A), G) = \omega(A, \psi(G)) =0$. Thus $G \in (\im \phi)^{\perp_\Lambda}$ and we have shown that $\ker \psi \subseteq (\im \phi)^{\perp_\Lambda}$.  

($\subseteq$) Let $G \in  (\im \phi)^{\perp_\Lambda}$. This implies for all $A \in \mc A$, $0 = \lambda(\phi(A), G) = \omega(A, \psi(G))$. As $\Omega$ is non-degenerate, this implies $\psi(G) =0$ which implies $G \in \ker\psi$. Therefore, $(\im \phi)^{\perp_\Lambda} \subseteq \im \psi$.
\end{proof}

This also provides a corollary:

\begin{corollary}
There exist partial right and left inverses, respectively, for $\psi$ and $\phi$, $\tilde \psi:\mc B \to \mc F$ and $\tilde \phi:\mc F \to \mc A/\ker \phi$, which satisfy the relation
\begin{align}
\Lambda \tilde \psi= \tilde \phi^\star \tilde \Omega,
\end{align}
where the restriction is to $\mc B = (\ker \phi)^{\perp_\Omega}$ and $\tilde \Omega$ is as defined in the above proof. They are unique to each other but not unique in general.
\end{corollary}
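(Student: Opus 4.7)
The plan is essentially to extract and sharpen the construction already embedded in the proof of Theorem \ref{th:fund}. First, I would note that $\phi: \mc A \to \mc F$ factors as $\mc A \xrightarrow{\pi} \mc A/\ker \phi \xrightarrow{\bar \phi} \im \phi \hookrightarrow \mc F$, where $\pi$ is the canonical projection and $\bar \phi$ is a linear isomorphism. To construct $\tilde \phi: \mc F \to \mc A/\ker \phi$, I would pick any linear complement $\mc N$ so that $\mc F = \im \phi \oplus \mc N$, set $\tilde \phi$ equal to $\bar \phi^{-1}$ on $\im \phi$, and set $\tilde \phi$ equal to the zero map on $\mc N$. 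This gives the right-inverse identity $\tilde \phi \circ \phi = \pi$, which is the appropriate sense in which $\tilde \phi$ inverts $\phi$ modulo the (unavoidable) kernel.

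Next, define $\tilde \psi := \Lambda^{-1} \tilde \phi^\star \Omega \, \iota_{\mc B}$ on $\mc B = (\ker \phi)^{\perp_\Omega}$. The point of restricting to $\mc B$, as observed in the theorem's proof, is precisely to ensure that $\Omega(B)$ annihilates $\ker \phi$ and so descends to a genuine element of $(\mc A/\ker \phi)^*$, making the composition $\tilde \phi^\star \Omega \, \iota_{\mc B}$ well-defined. The relation $\Lambda \tilde \psi = \tilde \phi^\star \Omega \, \iota_{\mc B}$ is then immediate from invertibility of $\Lambda$. To verify that $\tilde \psi$ is a right inverse of $\psi$, I would compute, for arbitrary $A \in \mc A$ and $B \in \mc B$,
\begin{align*}
\omega(A, \psi \tilde \psi(B)) &= \lambda(\phi(A), \tilde \psi(B)) = \bigl(\Lambda \tilde \psi(B)\bigr)(\phi(A)) \\
&= \bigl(\tilde \phi^\star \Omega(B)\bigr)(\phi(A)) = \bigl(\Omega(B)\bigr)(\tilde \phi \phi(A)) = \omega(\tilde \phi \phi(A), B) = \omega(A, B),
\end{align*}
using the defining equation $\phi^\star \Lambda = \Omega \psi$ of the gauge structure in the first step, and in the last step the fact that $\tilde \phi \phi(A) - A \in \ker \phi$ while $B \in (\ker \phi)^{\perp_\Omega}$. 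Non-degeneracy of $\Omega$ then forces $\psi \tilde \psi = \iota_{\mc B}$.

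Finally, for the uniqueness claim, invertibility of $\Lambda$ means that once $\tilde \phi$ is chosen, $\tilde \psi$ is forced by the relation, which is what is meant by \emph{unique to each other}; the residual non-uniqueness of the pair is simply the freedom in choosing the complement $\mc N$ (equivalently, in how $\tilde \phi$ extends $\bar \phi^{-1}$ beyond $\im \phi$). I expect the only delicate point is the well-definedness of $\tilde \phi^\star \Omega \, \iota_{\mc B}$, since $\tilde \phi$ takes values in a quotient space and so its pullback only acts on functionals that factor through $\mc A/\ker \phi$; everything else is a routine unpacking of the gauge-structure identity. No new ideas beyond those already used in the proof of Theorem \ref{th:fund} should be needed.
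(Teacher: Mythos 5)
Your proof is correct and takes essentially the same approach as the paper: the corollary is read off directly from the $(\subseteq)$ direction of the proof of Theorem \ref{th:fund}, where $\tilde\psi = \Lambda^{-1}\tilde\phi^\star\Omega\,\iota_{\mc B}$ is constructed and the restriction $\mc B = (\ker\phi)^{\perp_\Omega}$ is derived as exactly the well-definedness condition on $\tilde\phi^\star\Omega\,\iota_{\mc B}$. Your explicit computation verifying $\psi\tilde\psi = \iota_{\mc B}$ usefully spells out a step the paper declares ``clear,'' and your identification of the freedom in the complement to $\im\phi$ as the only source of non-uniqueness matches the paper's intent.
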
 

This could be useful for stabilizer codes as defining a partial left inverse for $\phi$, $\tilde \phi$, yields a unique partial right inverse for $\psi$, $\tilde \psi$, which one familiar with fault-tolerance quantum computing and memory recognizes as a {\it decoder map} \cite{Terhal2015}. We plan to make use of this in future work. 

\subsection{Example: $U(1)$ Gauge Theory}

The discussion in Section \ref{sec:stabgen} shows that any stabilizer set can be used to form a linear gauge structure and so any one of our stabilizer codes suffices as an example. We can also show that $U(1)$ gauge theory possesses a linear gauge structure. We let $\mc A = L^2(\mb R^4,\mb R^4)$ and $\mc F =L^2(\mb R^4,\mb R^4) \otimes_{\text{AS}}L^2(\mb R^4, \mb R^4)$ where $\otimes_{\text{AS}}$ is the anti-symmetrized tensor product and $L^2$ is the usual Hilbert space of square-integrable functions\footnote{Note that we could just as well have chosen the space of operator-valued or Lie-algebra-valued functions as are used in quantum field theory since all we require are vector space properties. But since we are not using any additional algebraic properties, we stick with classical field theory for simplicity of notation.}. We then let $\omega$ and $\lambda$ be the respective $L^2$ inner-products. $\phi$ is given by (half) the exterior derivative which in component form is $\left(\phi (A)\right)_{\mu \nu} = \hlf( \partial_\mu A_ \nu - \partial_\nu A_\mu)$. From this, we derive $\psi$ uniquely by using Eq. \eqref{eq:def2},
\begin{align}
\lambda(\phi(A), F) =\hlf \int d^4x (\partial_\mu A_\nu - \partial_\nu A_\mu) F^{\mu \nu} = \int d^4x A_\mu \partial_\nu F^{\mu \nu} =\omega(A, \psi(F)).
\end{align}
This implies that in component form, $\psi(F)^\mu =  \partial_\nu F^{\mu \nu}$. Note how this result requires $\mc F$ be anti-symmetrized. We then recognize that the in-homogeneous Maxwell equations are encapsulated in the map $-\psi \phi$. We can also confirm that Eq. \eqref{eq:cons} is correct. A member of  $\ker \phi$ can be written as $(\partial a)_\mu = \partial_\mu a$, for some scalar field $a \in L^2(\mb R^4, \mb R)$. For any member $j \in (\ker \phi)^{\perp_\Omega}$,
\begin{align}
0 = \omega(j, \partial a) = \int d^4x j_\mu \partial^\mu a = - \int d^4x \partial_\mu j^\mu a.
\end{align}
For this to be true for any $a$, it must be that $\partial_\mu j^\mu =0$, which is the continuity equation.

\subsection{Symplectic Gauge Structures}  

It is reasonable to ask why $-\psi \phi$ represents the Maxwell equations, which traditionally is derived from the minimization of a gauge invariant action. Such an action can be expressed in our language as
\begin{align}\label{eq:action}
S(A, J) = - \lambda(\phi(A), \phi(A)) -\omega(A,J) = - \omega( A, \psi \phi(A) +J ),
\end{align}
where we have used linearity. Thus $-\psi \phi(A) = J$ represents the zero of the action (which is not necessarily the minimum). 
%
%
However, this assumes $\lambda(\phi(A), \phi(A)) \neq 0$ which does not hold for $\mb F_2$ (stabilizer) gauge structures. This suggests a more general property a gauge structure might have:

\begin{definition}
An $\mb F$-linear gauge structure $GS =\left(( \mc A, \Omega, \phi), (\mc F, \Lambda, \psi)\right)$ is symplectic if and only if 
\begin{align}
\phi^\star \Lambda \phi = 0_{\mc A \to \mc A^*}.
\end{align}
\end{definition}

where $0_{\mc A \to \mc A^*}$ is the zero map from $\mc A \to \mc A^*$. As before, we can conveniently write this for  all $A,B\in \mc A$ as
\begin{align}
\lambda(\phi(A), \phi(B)) =0.
\end{align}
This is a strong condition to satisfy but when it does, we have the following proof:
\begin{theorem}
An $\mb F$-linear gauge structure $GS =\left( (\mc A, \Omega, \phi), (\mc F, \Lambda, \psi)\right)$ is symplectic if and only if $\im \phi \subseteq \ker \psi$.
\end{theorem}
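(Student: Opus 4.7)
The plan is to reduce both directions of the equivalence to a single application of the defining identity of a linear gauge structure, namely $\lambda(\phi(A), F) = \omega(A, \psi(F))$ (Eq.~\eqref{eq:def2}), combined with the non-degeneracy of $\Omega$. The key observation is that setting $F = \phi(B)$ in this identity immediately yields
\begin{align}
\lambda(\phi(A), \phi(B)) = \omega(A, \psi\phi(B))
\end{align}
for all $A, B \in \mc A$. This single equation makes the two sides of the biconditional essentially the same statement viewed through $\Omega$.

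For the forward direction, I would assume $GS$ is symplectic, so the left-hand side vanishes identically. The displayed equation then gives $\omega(A, \psi\phi(B)) = 0$ for every $A \in \mc A$. By the non-degeneracy of $\Omega$ (which is part of the definition of a gauge structure and which means $\omega(\,\cdot\,, v) = 0$ forces $v = 0$), this forces $\psi\phi(B) = 0$ for every $B$, i.e. $\im \phi \subseteq \ker \psi$.

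For the reverse direction, I would assume $\im \phi \subseteq \ker \psi$, so $\psi\phi(B) = 0$ for every $B \in \mc A$. Plugging this into the right-hand side of the displayed equation gives $\omega(A, 0) = 0$ and hence $\lambda(\phi(A), \phi(B)) = 0$ for all $A, B$. Rewriting this in the abstract form, $\phi^\star \Lambda \phi = 0^*$, which is exactly the condition that $GS$ is symplectic.

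There is no real obstacle here; the only subtlety is being careful about which of $\Omega$ or $\Lambda$ must be non-degenerate to close the forward direction, and it is $\Omega$ (not $\Lambda$'s invertibility, which was only needed to guarantee the existence of $\psi$). Both implications follow from a single line of algebra once the defining adjoint relation \eqref{eq:def2} is invoked, so the proof will be short and symmetric in structure.
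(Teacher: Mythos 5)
Your proof is correct and follows exactly the same route as the paper's: substitute $F = \phi(B)$ into the defining adjoint identity \eqref{eq:def2} to obtain $\lambda(\phi(A),\phi(B)) = \omega(A,\psi\phi(B))$, then invoke the non-degeneracy of $\Omega$ to convert the vanishing of the left side for all $A,B$ into $\psi\phi = 0$. The paper compresses both directions into a single chain of biconditionals, while you unpack them into two implications, but the content is identical.
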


\begin{proof}
$GS$ is symplectic if and only if for all $A,B \in \mc A$, 
\begin{align}
0=\lambda(\phi(A), \phi(B)) = \omega(A, \psi\phi(B)).
\end{align}
As $\Omega$ is non-degenerate, this is true if and only if $\psi\phi(B)=0$ for all $B$. Therefore, $GS$ is symplectic if and only if $\im \phi \subseteq \ker \psi$.
\end{proof}

A symplectic gauge structure lacks a Maxwell equation, but it does allows us to define the{ \it gauge homology} as
\begin{align}
\mc H(GS)= \ker \psi /\im \phi,
\end{align}
which we recognize as the logical subspace for $\mb F_2$ gauge structures. From this we define a {\it trivial} symplectic gauge structure as one such that the following short sequence is exact:
\begin{equation}
\begin{tikzcd}
0 \arrow[r] & \mc A \arrow[r, "\phi"] & \mc F \arrow[r, "\psi"] & \mc A \arrow[r] &0.
\end{tikzcd}
\end{equation}
A trivial symplectic gauge structure is one such that $\mc H(GS)$ as well as the BrLE rules are trivial.

 In the case of $U(1)$ theory, it is easy to see that
\begin{align}
\lambda_s(F,G)= \int d^4x \epsilon_{\alpha \beta \mu \nu} F^{\alpha \beta} G^{\mu \nu},
\end{align}
makes $\left( (\mc A, \Omega, \phi), (\mc F, \Lambda_s, \psi)\right)$ a symplectic gauge structure and generates the famous topological theta term.\footnote{This is really related to the Chern-Simons differential form, as again, there is no Lagrangian. The use of this form has consequences for $\psi$ as determined by $\phi, \Omega$ and $\Lambda_s$ even if it doesn't change $\ker \phi$.} Furthermore, we can caste this symplectic gauge structure into a recognizable form from $\mb F_2$ gauge structures by noting that $\mc F \simeq  L^2(\mb R^4,\mb R^3)\oplus  L^2(\mb R^4,\mb R^3)$, which is the familiar fact that the electromagnetic tensor is equivalent to the electric and magnetic 3-vector fields. Thus $F^{\mu\nu} \simeq (\vec{F}^1, \vec{F}^2)$ where $F^1_i = F^{0i}$ and $F^{ij}= \epsilon^{ijk} F^2_k$. In this form, $\lambda_s$ is written as
\begin{align}
\lambda_s(F, G)= \int d^4x F^1_i G_i^2 + \int d^4x G^1_i F_i^2 = \omega(F^1, G^2) + \omega(G^1, F^2),
\end{align}
where we have abused our notation for $\omega$. Compared with Eq.\eqref{eq:lamdef2}, we recognize this as the more direct analog of the symplectic form defining commutation for stabilizers. However, it is not an actual symplectic form, but it can be with the modest change of a negative sign between the two terms. This is our general method for constructing symplectic forms: when $\mc F \simeq \mc B \oplus \mc B$ where $\mc B$ has a non-degenerate form $\Omega'$,  and $F \simeq (F^1, F^2)$ and $G\simeq (G^1,G^2)$ for all  $F, G \in \mc F$, we can construct a symplectic form via
\begin{align}
\lambda_s( F, G) = \omega'(F^1, G^2) - \omega'(G^1,F^2). 
\end{align}
It is easy to show that the resulting $\Lambda_s$ is non-degenerate if $\Omega'$ is non-degenerate. Then saying the gauge structure is symplectic is the analogous statement of commutativity of the stabilizer group in the $\mb F_2$ case. 

\subsection{Connection between Gauge Structures and Gauging/Ungauging Maps}

In a paper by Kubica and Yoshida \cite{Kubica2018}, CSS stabilizer codes were described in the language of a gauging and ungauging procedure. In this section, we described the connection between the Kubica and Yoshida description and linear gauge structures. We emphasis that while the two are one-to-one for CSS codes, the symplectic linear gauge structure generalized to non-CSS codes and emphasizes the importance and connection between the $\omega, \lambda, \phi$ and $\psi$ maps.

For an $\mb F_2$-linear gauge structure $\left(( \mc A, \Omega, \phi), (\mc F, \Lambda, \psi)\right)$ which represents a CSS stabilizer code, there is a natural decomposition of the space $\mc A= \mc A_X \oplus \mc A_Z$, where the subspaces are generated by the X-type and Z-type stabilizers, respectively. Furthermore, we have already discussed how $\mc F =\mc F_X \oplus \mc F_Z \simeq \wp(Q)_X \oplus \wp(Q)_Z$. Following Ref.~\cite{Kubica2018} one can define a {\it chain complex} \footnote{As a review, a chain complex is a set of maps $\{\partial_n: \mc V_{n} \to \mc V_{n-1}\}_{n<N}$ called {\it boundary maps} between linear vector spaces $\{\mc V_n\}_{n<N}$ such that $\im \partial_{n+1} \subseteq \ker \partial_{n}$. Similarly, a co-chain complex is a set of maps $\{ \partial^\dagger_n : \mc V_n \to \mc V_{n+1}\}_{n<N}$ called {\it co-boundary maps} such that $\ker \partial^\dagger_{n} \supseteq \im \partial^\dagger_{n-1}$. This allows one to define the $n^{th}$ (co)homology groups $\mc H_n =\ker \partial_n / \im \partial_{n+1}$ ($\mc H^n = \ker \partial^\dagger_n/ \im \partial^\dagger_{n-1}$).},
\begin{equation}
\begin{tikzcd}
\mc A_Z \arrow[r, "\partial_Z"] & \wp(Q)_Z \arrow[r, "\partial_X"]& \mc A_X,
\end{tikzcd}
\end{equation}
where the boundary map $\partial_Z$ maps to the Z support of the stabilizer product and the boundary map $\partial_X$ maps to the error syndromes among the X-type stabilizers. Likewise, we can define the co-chain complex,
\begin{equation}
\begin{tikzcd}
\mc A_X \arrow[r, "\partial_X^\dagger"] & \wp(Q)_X \arrow[r, "\partial^\dagger_Z"]& \mc A_Z,
\end{tikzcd}
\end{equation}
Where the co-boundary maps are defined as the transpose of the boundary maps. From this alone, one can see the relation between this construction and the symplectic gauge structure by identifying
\begin{subequations}
\begin{align}
 \phi=& \partial_X^\dagger \oplus \partial_Z,\\
\psi =& \partial_X \oplus \partial_Z^\dagger,
\end{align}
\end{subequations}
 where by definition of a (co)chain complex $\psi \phi=0$. We can check the braiding condition for $A_X + A_Z \in \mc A_X \oplus A_Z$ and $F_X + F_Z \in \mc F_X \oplus \mc F_Z$,
\begin{align}
\lambda(\phi(A), F)=& \omega(\partial_X^\dagger (A_X), F_Z)+ \omega(\partial_Z(A_Z), F_X)\nonumber \\
 =&  \omega(A_X, \partial_X (F_Z))+ \omega(A_Z, \partial_Z^\dagger(F_X)) =\omega(A, \psi(F)),
\end{align}
where we are using Eq.~\eqref{eq:lamdef2} in the first equality. From this, we see that all other consequences follow. We refer the interested reader to Ref.~\cite{Kubica2018} for more on the gauging and ungauging procedure.

\section{Connecting $\mb F_2$ and $\mb R$ Gauge Structures}\label{sec:connecting}

We now use the machinery from the last two sections to generate a method of constructing one gauge structure from another. We provide a general method for constructing an $\mb R$-linear gauge structure from an $\mb F_2$-linear gauge structure which possess translation symmetry. We use $\mb R$ to describe these theories instead of the more common ``$U(1)$'' terminology to emphasize that unlike $U(1)$ theory, there is no obvious sense in which the continuum gauge structure is related to the $U(1)$ Lie group or any other. In fact, there is no sense of a matter field to be invariant under some group action nor a field theory in general, though one can presumably construct one via Eq.~\eqref{eq:action}. We return to our interpretation of gauge structures in Section \ref{sec:discus}. Thus it is more appropriate to refer to a continuum theory as $\mb R$  (which can easily be generalized to $\mb C$) as this is the field of the gauge structure. 

Due to the results of Section \ref{sec:gauge}, all we need is to construct an $\mb R$ version of $\phi$ which somehow captures the features of the $\mb F_2$ theory and we use symmetry as a guide. As all the stabilizer codes considered here possess translation symmetry, we use the polynomial formalism of Haah, which naturally and compactly captures the translation symmetry of stabilizer codes. Though we review some of the most basic features of this formalism, we refer unfamiliar readers to Ref. \cite{Haah2013} for details.

The polynomial formalism represents both position and translation using monomials of a polynomial ring over $\mb F_2$ which we denote as $\mb F_2[\{e_i\}]$, where $\{e_i\}$ is the set of variables which form the polynomials. The number of variables is the number of cycles or dimensions in the translation group, the power of the monomial represents the position, and similar to before, the product of operators is represented by summation over monomials. For example, suppose we have a 1D cycle of qubits. We can represent all of $\mc P$ with a two dimensional vector of polynomials (one entry for each factor of $\wp(Q)$), and so we represent an example operator using $\mb F_2[e_1]$ as
\begin{align}
z_2 x_3 y_5 \simeq \begin{pmatrix}    
e_1^3 + e_1^5\\
e_1^2 + e_1^5
\end{pmatrix},
\end{align}
where the first entry is for $X$-type operators and the second for $Z$-type operators. We maintain this convention throughout the paper. The usefulness of this formalism is that we can translate such operators by multiplying by a monomial with the appropriate power. Moreover, we can multiply many translated versions of that operator by multiplying the vector by a general polynomial and using distributivity. For example, suppose we want to multiply our example operator by the version of itself translated by two. This is given by
\begin{align}
z_2 x_3 y_5 T^2(z_2 x_3 y_5) \simeq (1 + e_1^2)\begin{pmatrix}
e_1^3 + e_1^5\\
e_1^2 + e_1^5
\end{pmatrix} = 
\begin{pmatrix}
e_1^3 + e_1^7\\
e_1^2 + e_1^4 + e_1^5 + e_1^7
\end{pmatrix}
 \simeq z_2 x_3 z_4 z_5 y_7
\end{align}
Note that the $e_1^5$ term canceled in the first entry as this is sum mod $2$. Since members of the stabilizer group are generated by this exact process of translating and multiplying operators, we can now represent all members of the stabilizer group by these polynomials. We also use the convention that $\bar e_1 = e_1^{-1}$ in what follows. 

In the polynomial formalism, a stabilizer code is entirely encoded in the stabilizer map which is a matrix of some $\mb F_2$ polynomials. Each column represents a stabilizer {\it type} which is defined as the equivalence class of stabilizers related to each other by some translation. For example, there are two types in the $d=2$ toric code, while in the $d=3$ toric code there are four. That is, although all plaquette operators are related by both translation and rotation, we only consider the equivalence class generated by modding out translation, giving us three distinct plaquette types. X-cube also has four and Haah's code has two. We use $m$ for the number of types and so the right dimension of the matrix is $m$. The rows then represent the equivalence classes of Pauli operators related by translation. As each vertex (or general unit cell) has $n$ qubits and there are two independent Pauli operators for each qubit, the left dimension of the matrix is $2n$. We also use the convention that qubits are always on the vertices, so for models originally discussed with one qubit on each edge, we move those qubits in each of the positive coordinate directions to the attached vertex so that $n=d$.  The stabilizer map acts on an $m$-dimensional vector of polynomials which represents some member of $\wp(S)$ and maps to a vector of polynomials representing a Pauli operator in $\mc P$. We recognize the stabilizer map as the polynomial representation of $\phi$. For the $d=2$ toric code, the stabilizer map is a $4\times 2$ matrix of the form 
\begin{align}
\phi_{\text{TC}_2}\simeq \begin{pmatrix}
1+ \bar e_1 & 0\\
1+\bar e_2 & 0\\
0&1 + e_2 \\
0& 1+e_1 \\
\end{pmatrix}.
\end{align}
Similarly, the $d=3$ toric code is a $6 \times 4$ matrix of the form 
\begin{align}
\phi_{\text{TC}_3}\simeq \begin{pmatrix}
1+ \bar e_1 & 0 &0 &0\\
1+\bar e_2 & 0 & 0 &0\\
1+\bar e_3 & 0 & 0 &0\\
0 & 0 & 1+e_3 & 1+ e_2 \\
0& 1+e_3 & 0 & 1+e_1 \\
0& 1+e_2 & 1+e_1 &0
\end{pmatrix}.
\end{align}
To associate this map with an $\mb R$-linear gauge structure, consider that a variable such as $e_1$ in the matrix represents a shift forward in the $\hat e_1$-direction by one, so $1+e_1 =1-e_1$ is a difference between the value at that position and the value at its forward neighbor. This suggests we make the substitution $1+e_1 \to \pm \partial_1$. Going from $\mb F_2$ to $\mb R$ or any larger, finite-dimensional field always involves some sign ambiguity, so we try to make the choice such that the mapping respects the symmetries and maintains the properties of the $\mb F_2$ gauge structure. Using this, we find that our $\mb R$ gauge structure which maps from the $d=3$ toric code is 
\begin{align}
\phi_{\text{TC}_3} \to \begin{pmatrix}
-\partial_1 & 0& 0& 0\\
- \partial_2& 0 & 0 &0\\
- \partial_3 & 0 & 0 &0\\
0 & 0 & -\partial_3 & \partial_2 \\
0& \partial_3 & 0 & -\partial_1 \\
0& -\partial_2 & \partial_1 &0
\end{pmatrix},
\end{align}
which we recognize as the static $U(1)$ $\phi$ map where the top three rows represent the electric field and the bottom three represent the magnetic field. We have the freedom to add time derivatives since stabilizer codes are inherently static. Thus we add the usual $\partial_t \vec A$ term to recover the full $U(1)$ theory. Note that there is no necessity in mapping this way and so there is some ambiguity, primarily in the coefficients. The important thing is to capture the $\mb F_2$ constraint structure with the pure gauge of the $\mb R$-linear gauge structure. For the $d=3$ toric code example, consider any polynomial $a \in \mb F_2[e_1,e_2,e_3]$. Then any trivial constraint can be written in the form 
\begin{align}
\begin{pmatrix}
0\\
1+e_1\\
1+e_2\\
1+e_3
\end{pmatrix} a \to \partial a,
\end{align}
where we have included the time derivative.

We can now turn to the fracton models and apply the same method. The X-cube stabilizer map is once again a $6\times 4$ matrix with the form
\begin{align}\label{eq:contXC}
\phi_{\text{XC}} \simeq \begin{pmatrix}
1+ e_2+e_3 +e_2e_3 & 0& 0& 0\\
1+ e_1 +e_3 + e_1e_3& 0 & 0 &0\\
1+ e_1 + e_2 + e_1e_2 & 0 & 0 &0\\
0 & 0 & 1+\bar e_1 & 1+\bar e_1 \\
0& 1+\bar e_2 & 0 & 1+ \bar e_2 \\
0& 1 + \bar e_3 & 1 + \bar e_3 &0
\end{pmatrix}
\to
\begin{pmatrix}
-\partial_2 \partial_3 & 0& 0& 0\\
-\partial_1 \partial_3&0& 0&0\\
-\partial_1 \partial_2 &0&0 &0\\
0 & 0 & \partial_1 & - \partial_1 \\
0& -\partial_2 & 0 & \partial_2 \\
0& \partial_3 & -\partial_3 &0
\end{pmatrix},
\end{align}
where we have used  $1 + e_1 + e_2 +e_1e_2 = (1+ e_1)(1+e_2)$ and likewise for the other two. So our ``magnetic field'' is $F^1_i = \hlf |\epsilon_{ijk}|\partial_j \partial_k A_0$ and the ``electric field'' is   $F^2_i =\partial_i(A_{i+1} - A_{i+2})$ (conforming to the usual naming convention) where addition in the subscript is mod 3 (starting at 1). As a check, we can show this gauge structure is still symplectic  by considering 
\begin{align}
\lambda_s(\phi(A), \phi(B)) =&\omega' (F^1, G^2) - \omega'(G^1, F^2)= \nonumber \\
=\hlf &\sum_i\left(\int d^3x |\epsilon_{ijk}|\partial_j \partial_k A_0 \partial_i (B_{i+1} -B_{i+2}) - \int d^3x |\epsilon_{ijk}|\partial_j \partial_k B_0 \partial_i (A_{i+1} -A_{i+2}) \right) \nonumber \\
=& -  \int d^3x\partial_x \partial_y \partial_z A_0\sum_i (B_{i+1} -B_{i+2}) + \int d^3x \partial_x \partial_y \partial_z B_0 \sum_i (A_{i+1} -A_{i+2})  \nonumber \\
=&0.
\end{align}
This result should not be surprising given X-cube is a proper stabilizer code.\footnote{Note that the sign which determines whether $\lambda_s$ is symplectic or not did not matter here as was the case with the $d=3$ toric code $\to$ $U(1)$ gauge theory case. This seems to be a consequence of the CSS form of the parent model. If we look back to Eq.\eqref{eq:lamdef2} and consider the commutativity of an $X$-type stabilizer with a $Z$-type stabilizer, one term is trivially zero because the two sets being intersected are both empty, and the other term is zero because the sets overlap an even number of times. So by linearity, the commutativity of general CSS stabilizer group members is always such that the two terms of \eqref{eq:lamdef2} are zero independently.  However, this is not always the case as there exist non-CSS codes. Thus we maintain the negative sign to make $\lambda_s$ a proper symplectic form.} This result is similar to the model suggested in \cite{Ma2018a}.

We now consider the stabilizer map for Haah's code, which is a $4 \times 2$ matrix with the form
\begin{align} \label{eq:contHaah}
\phi_{\text{Haah}} \simeq\begin{pmatrix}
1+e_1 +e_2 +e_3 &0\\
1+ e_1e_2 + e_2e_3 + e_1 e_3 &0\\
0& 1+ \bar e_1 \bar e_2 + \bar e_2 \bar e_3 + \bar e_1 \bar e_3\\
0 & 1 + \bar e_1 + \bar e_2 + \bar e_3
\end{pmatrix} \to 
\begin{pmatrix}
\partial_{[111]}&0\\
\partial^2_{\text{mix}} &0\\
0&\partial^2_{\text{mix}}\\
0 &\partial_{[111]}
\end{pmatrix},
\end{align}
where we have used the definitions from Section \ref{sec:intro}. Again, in this case, it is instructive to check that this is still a symplectic gauge structure: 
\begin{align}
\lambda_s(\phi(A), \phi(B))=& \hlf \int d^3x \left(\partial_{[111]}A_1 \partial^2_{\text{mix}}B_2+\partial_{[111]}B_2 \partial^2_{\text{mix}}A_1\right) \nonumber \\
&-\hlf \int d^3x \left(\partial_{[111]}A_2 \partial^2_{\text{mix}}B_1+ \partial_{[111]}B_1 \partial^2_{\text{mix}}A_2\right)   \nonumber \\
=& \hlf \int d^3x \left(\partial_{[111]}A_1 \partial^2_{\text{mix}}B_2-\partial^2_{\text{mix}}B_2 \partial_{[111]}A_1\right) \nonumber \\
&  -\hlf \int d^3x \left(\partial_{[111]}A_2 \partial^2_{\text{mix}}B_1-\partial^2_{\text{mix}}B_1\partial^2_{\text{mix}}A_2\right) \nonumber \\
=&0,
\end{align}
where we have used integration by parts once on $\partial_{[111]}$ and twice on $\partial^2_{\text{mix}}$, thus giving the three factors of $-1$ necessary for the cancellation to occur. Ref. \cite{Bulmash2018} purposes a similar model for a continuous version of Haah's code which differs by a term that would be equivalent to replacing $\partial^2_{\text{mix}} \to \partial^2_{\text{mix}}  - 2 \partial_{[111]}$. Although consistent with our mapping i.e. $0 \to 2 \partial_{[111]}$, such a term would make the gauge structure no longer symplectic. Recall that the gauge structure being symplectic is a statement about whether the parent model is a proper stabilizer code and whether the gauge homology is well-defined. Thus we do not include such a term.

\subsection{Charge Conservation in $\mb R$ Gauge Structures}

\subsubsection{Toric Code}
As with stabilizer codes, charge conservation is given by the BrLE rules which allows us to understand the properties of the charges without ever considering the form of $\psi$. The $d=2,3$ toric codes are captured by $U(1)$ theory which we have already discussed. However, we can also see the importance of the boundary conditions. For infinite boundary conditions, there are no constant functions in $\mc A$, but if we use periodic boundary conditions, i.e. our spacetime has non-trivial topology, the constant functions are allowed and clearly included in $\ker \phi$. Thus we have the additional BrLE rule for a constant function $a$
\begin{align}
0= \omega(j,a) =\int d^4x a_\mu j^\mu=  a_\mu \int d^4x j^\mu. 
\end{align}
This is only true for all $a$ if the total 4-current is globally conserved. A similar statement of global flux conservation follows if we consider functions in $\ker \phi$ which are dependent on any single direction ( thus constant in orthogonal directions), i.e. $a_0 =0$ and  $\vec{a} = a(\vec v \cdot \vec x)  \vec{v}$ where $\vec v$ is any constant 3-vector and $a$ any function of a single variable. This is in $\ker \phi$ because the resulting electric field is trivially zero and the magnetic field is given by $ a' \epsilon_{ijk} v_jv_v =0$. Again, such a function is only square-integrable in a topologically non-trivial space. We can then apply BrLE rules to find that 
\begin{align}
0 = \omega(j, a) = \int dt \int dv \int du dw \vec j \cdot \vec a= T \int dv  a \left(\int du dw j_\parallel\right),
\end{align}
where $T$ is the period in the time direction, $\{\vec v , \vec u, \vec w\}$ form an orthogonal basis for $\mb R^3$  and $j_\parallel$ is the component of $\vec j$ which is parallel to $\vec v$. So this BrLE rule is true for all such $a$ if and only if $\int du dw j_\parallel =0$ which is the total flux through any plane perpendicular to $\vec v$. 

\subsubsection{X-Cube}

We now turn our attention to the $\mb R$ version of X-cube on a 3-torus. We once again have global charge conservation, but the statements of flux conservation are considerably different and ultimately responsible for fractonic behavior. We start with the $A_0$ field or the magnetic sector. The double derivatives kill any function of a single coordinate direction or linear combinations of them so that the kernel in this sector can be characterized as any function of the form $a_1(x_1) + a_2(x_2) + a_3 (x_3)$. This is analogous to any combination of planar cube constraints in the $\mb F_2$ case. Focusing on one of the coordinate directions, we apply the same logic as with $U(1)$ theory to find that  
\begin{align}
0 = \int d^3x j_0 a_1 = \int dx_1 a_1 \left(\int dx_2 dx_3 j_0\right),
\end{align}
and likewise for the other two directions. This implies $\int dx_i dx_j j_0 =0$ for $i\neq j$. This conservation of charge within intersecting coordinate planes implies that charge must form quadrupole moments at minimum and is a signature of fractonic behavior \cite{Pretko2017a, Ma2018a}.  
For the magnetic sector ($A_1, A_2$ and $A_3$), the kernel can be characterized by vector-valued functions of the form $a_1(x_1) \hat e_1 + a_2(x_2) \hat e_2 + a_3(x_3) \hat e_3$ and $a_0(\vec x) (\hat e_1 + \hat e_2 + \hat e_3)$. The first of these is the analog of the planar topological constraints and the second a reflection of the trivial constraints from the $\mb F_2$ version. 
The BrLE rule for the first case is exactly the same as flux conservation in $U(1)$ theory if we limit $\vec v$ to the coordinate directions. This alone does not yield fractonic behavior. However, we consider the second trivial pure gauge condition where the corresponding BrLE rule requires that $\vec j$ be perpendicular to the $[111]$ direction. For simplicity at a given point, assume $j_1=0$ and $j_2 = -j_3$. In the $\hat e_2, \hat e_3$ plane, this implies that if $\vec j$ points in the second quadrant, then along the $\hat e_1$ direction, there must be an equally long vector in the fourth quadrant (or a sum which is equally long) as depicted in Fig. \ref{contXC}. If we generalize this to a sinusoidal function which fits the periodicity of the system (i.e integrates to zero in the $\hat e_1$ direction), we see that the momentum (wave number) is quantized and confined to one direction. We interpret this as the continuum version which limits an electric particle's motion to the coordinate directions.  

\begin{figure}[t]
\centering
\includegraphics[scale=.4]{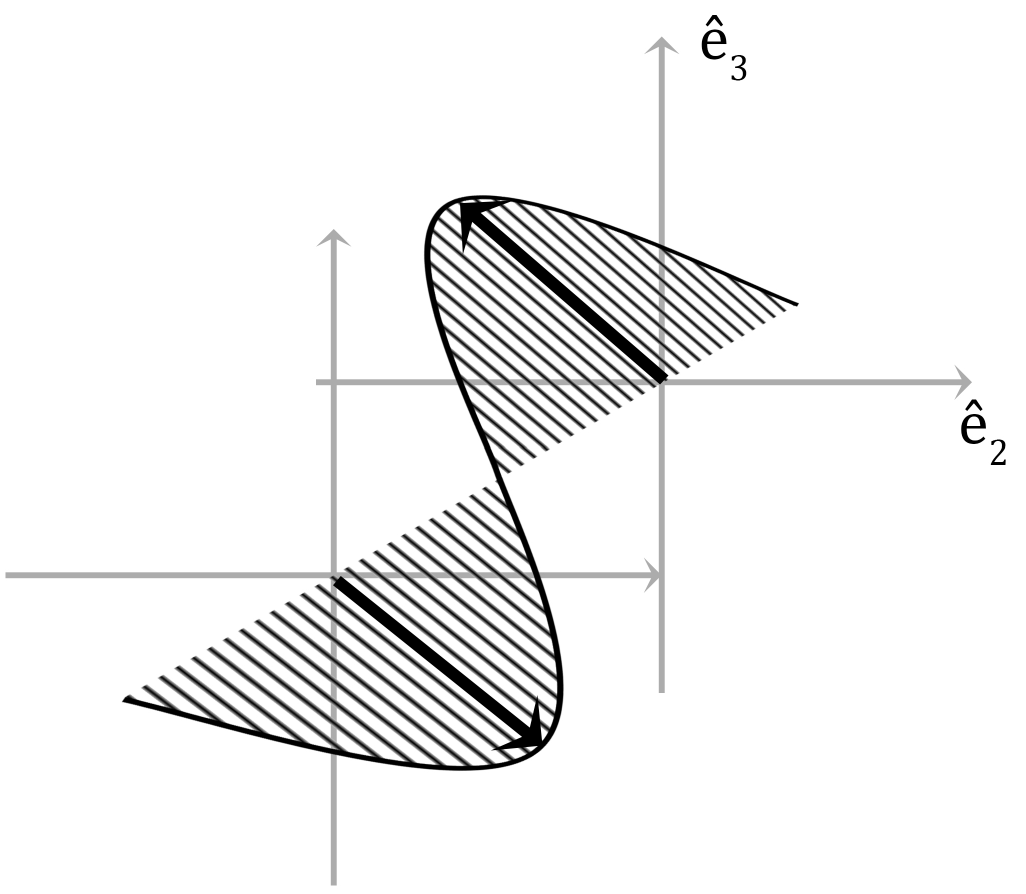}
\caption{Depiction of the fractonic confinement of the electric sector for the $\mb R$ version of the X-Cube model.}\label{contXC}
\end{figure}

\subsubsection{Haah's Code}
\begin{figure}[t]
\centering
\begin{tabular}{cc}
\subfloat[{}\label{recur}]{\includegraphics[scale=1.25]{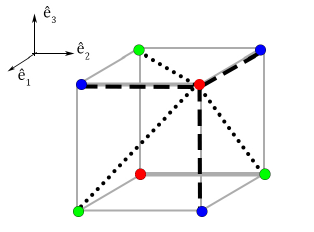}}&
\subfloat[{}\label{haahconst}]{\includegraphics[scale=.5]{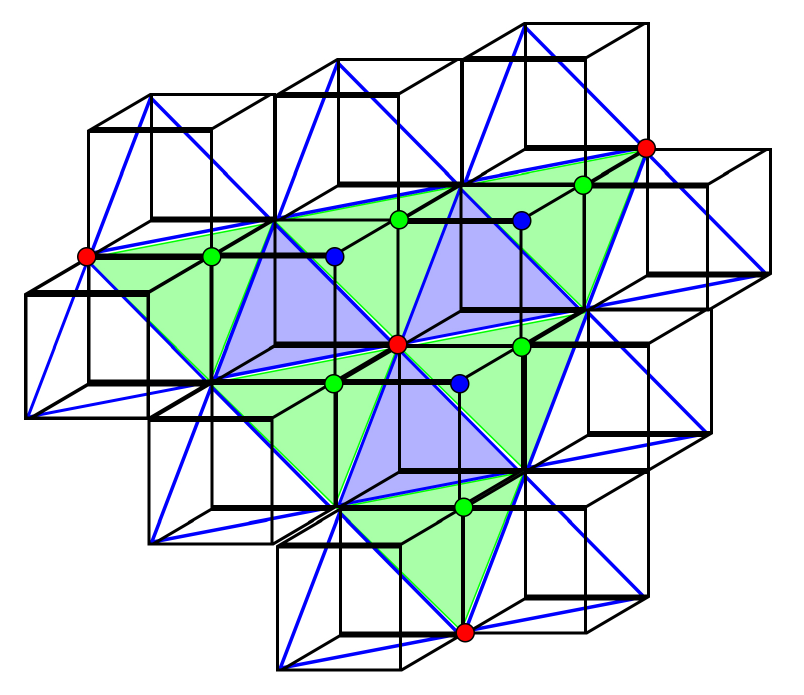}}
\end{tabular}
\caption{Visual depiction of mapping the constraints for Haah's code to coupled layers of the Newman-Moore model. (a) depicts Eqs. \protect \eqref{eq:constraints}; (b) depicts how vertices in the 3D model maps to a plane of the triangular lattice.  }\label{haah1}
\end{figure}

As with the $\mb F_2$ version of Haah's code, understanding $\ker \phi$ is not so obvious.  We focus on the electric sector, i.e. the first field of Eq. \eqref{eq:contHaah}, as the other follows by self-duality. For $a$ to be in the kernel, it must satisfy $\partial_{[111]} a  =\partial^2_{\text{mix}} a =0$.  The first equation implies $a$ is a constant in the $[111]$ direction. As a consequence, the second condition reduces to the 2D Laplace equation in the plane perpendicular to the $[111]$ direction. In theory, this is a complete characterization of $\ker \phi$ so long as we can match the boundary conditions. The first condition alone gives us some characterization of $\im \psi$. If 
\begin{align}
\int d\hat e_{[111]} j_0 =0,
\end{align}
then $j_0 \in \im\psi$ by the BrLE rules. This already suggests the model is not fully fractonic as some excitations are not confined in the $[111]$ direction in the same way the magnetic sector of the X-cube continuum model is not confined in the coordinate directions. But not all members of $\im \psi$ satisfy this condition, so it is possibility a subspace of $\im \psi$ is fully fractonic. 

We can use the continuum model to find some of the constraints for the $\mb F_2$ version. Following \cite{Schmitz2018}, we return to this question from Section \ref{sec:stabgen} by using the polynomial formalism and expand a constraint as $a(e_1,e_2,e_3)= \sum_{\vec{v} \in \mb Z^3} \tilde a(\vec{v}) e_1^{v_1} e_2^{v_2} e_3^{v_3}$. The coefficients must satisfy
\begin{subequations}\label{eq:constraints}
\begin{align}
\tilde a(v_1, v_2, v_3) =& \tilde a(v_1-1, v_2, v_3) +\tilde a(v_1, v_2-1, v_3) +\tilde a(v_1, v_2, v_3-1), \label{eq:cons1}\\
\tilde a(v_1, v_2, v_3) =&\tilde a(v_1-1, v_2-1, v_3) + \tilde a(v_1, v_2-1, v_3-1) +\tilde a(v_1-1, v_2, v_3-1).\label{eq:cons2}
\end{align}
\end{subequations}
These conditions are visually represented in Fig. \ref{recur}. From the continuum solution, we make the ansatz that there exists solutions which are constant in the $[111]$ direction. In the polynomial formalism, this is given by $\tilde a(v_1, v_2, v_3) =\tilde a(v_1+1, v_2+1, v_3+1)$ or vertices connected by the body-diagonal in the $[111]$ direction are equal. We then focus on a single ``Q-bert'' plane of cubes which forms three planes of vertices (technically four, but we only need to consider three of them) as shown in Fig.~\ref{haahconst}. Each of the planes forms a triangular lattice which are stacked and Eqs. \eqref{eq:constraints} couples them. The coupling is reminiscent of the Newman-Moore model \cite{Newman1999}. This model is a classical, planar stabilizer model on the triangular lattice, with stabilizers of the form $\prod_{v \in \triangle}x_v$ where $v$ indexes the vertices forming a triangle in one of the two sub-lattices of triangles (say, the upward pointing triangles). For Haah's code, if the first (red) $[111]$ layer of vertices is mapped to the physical spins of the Newman-Moore model, then the vertices of a layer up (green) are mapped to the parity of the triangle stabilizers by Eq. \eqref{eq:cons1}. The third layer (blue) is also mapped this way to the other triangles of the red layer via Eq. \eqref{eq:cons2}, but also mapped to the second in a similar way by \eqref{eq:cons1}. Finally following from our ansatz, \eqref{eq:cons2} couples the green layer to the red layer. Thus if we can satisfy these simultaneous coupling conditions for these three layers, the rest of the constraint follows by the ansatz. To simplify the picture, we flatten all three layers to a single plane of the triangular lattice, where blue triangles represent the blue layer, green triangles, the green layer and the vertices, the red layer. This is depicted in Fig. \ref{haahconst}. The coupling conditions reduce to the satisfiability of the following rules: the parity of a vertex is equal to the parity of the three triangles of each color which surrounds it and the parity of any triangle must be equal to the parity of the vertices which form it and the parity of the three oppositely colored triangles which surround it. After an exhaustive search, we find maximally 7 independent constraints as given in Fig. \ref{haah2a} (14 when we include the other sector). As a function of the lattice size, the actual number of constraints found for both sectors is 2 for odd lattice sizes, 6 for lattice sizes divisible by 2, and 14 for lattice sizes divisible by 4. The ``star'' pattern represents 2 distinct constraints since for the original cubic lattice, the pattern can be translated by one $[111]$ layer to obtain a distinct pattern. This corresponds to cyclic permutation of the colors or equivalently the three triangular lattices (where we recognize all triangles of a type forms a triangular lattice) for the flattened version. One can also translate/permute one more time in this way, but this is equivalent to the ``sum'' of the other two versions. Fig. \ref{haah2b} show how translated versions of the star pattern in the plane is given by addition with the other patterns.

These constraints are not enough to account for the full number of constraints as it does not scale with the lattice size. Also, if all the constraints did satisfy the ansatz, the excitations would not be confined in the $[111]$ direction, so we should suspect there must be additional constraints. One reason the continuum model does not capture the full fractonic behavior of Haah's code is that the conditions for $\ker \phi$ in the continuum version are symmetric in both the positive and negative $[111]$ direction, whereas it is clear from Fig. \ref{recur} that this is not true for the $\mb F_2$ Haah's code. This is related to the fact that the backward and forward lattice derivatives are not equal. We come to a similar conclusion as Ref. \cite{Bulmash2018} that the lattice is a necessary piece to recover the full fractonic behavior of Haah's code.

Despite not finding all constraints, we can still use what we know to understand why excitations must live at the end of fractal operators. Focusing on the constraint condition Eq. \eqref{eq:cons1}, we can combine this same condition for different points to form a sequence of fractal conditions, i.e. conditions that look like a scaled version of the original. Fig \ref{compcont} shows this process pictorially for the first three generations. A similar sequence can be formed using Eq. \eqref{eq:cons2}. So for any constraint, the sum of the four points in each of these fractal conditions must sum to zero, which implies only an even number of them can be in any constraint. Thus one way for a syndrome to intersect all constraints an even number of times is for the syndrome to contain the four points of any one of these conditions. This is the exact syndrome which is generated by fractal operators as discussed in \cite{Haah2013}. This points out another means of using the BrLE rules. In general, one might object that the BrLE rules are statements which involve the entire system and so they do not apply locally. Indeed, the BrLE rules for Haah's code suggest that for lattice sizes such that the code space is of minimal dimension (it only encodes two logical qubits), either sector can support any syndrome containing only two elements, i.e. it is not globally fractonic. This is true because the only two independent topological constraints are the product of all stabilizers of each type. But any Pauli operator responsible for such a syndrome is almost certainly of the size of the system since such a configuration cannot be generated locally. This is a fair point, but as Fig. \ref{compcont} demonstrates, any constraint must have a manifestation locally. That is, if we consider all stabilizers that are supported in some small, bounded set of qubits, there must exist a product of these stabilizers with no support in this set in order to form a constraint. These types of local rules could be thought of as a kind of local BrLE rules and are related to the Gauss-like laws as discussed in \cite{Schmitz2018}. It would then be these rules which govern how excitations are restricted locally. We hope to make this connection more explicit in future work.       

It is worth noting that the connection of Haah's code to coupled layers of the Newman-Moore model is still viable, where any layer in the stack is coupled to the two layers below it. This could lead to a coupled-layer construction for Haah's code similar to those already found for X-cube \cite{Ma2017, Vijay2017}. We reserve this possibility for future work. 

\begin{figure}[t]
\centering
\begin{tabular}{c}
\subfloat[{}\label{haah2a}]{\includegraphics[scale=.25]{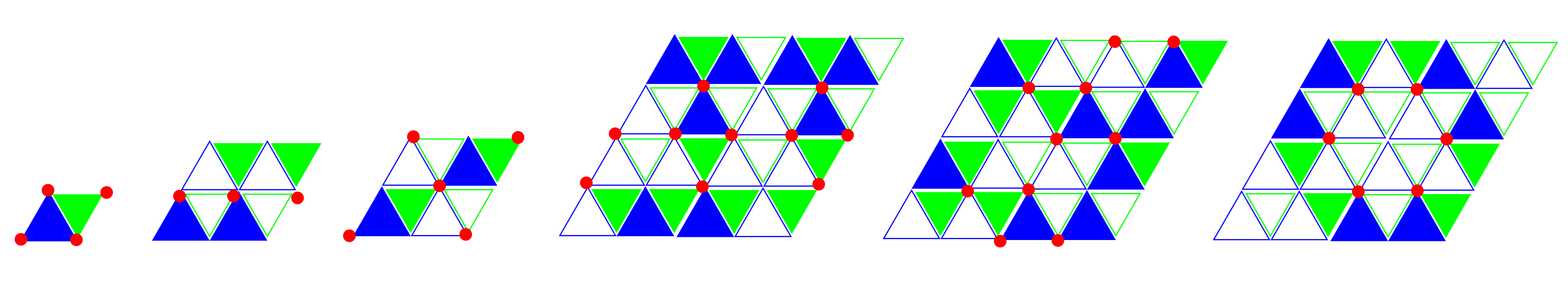}}\\
\subfloat[{}\label{haah2b}]{\includegraphics[scale=.2]{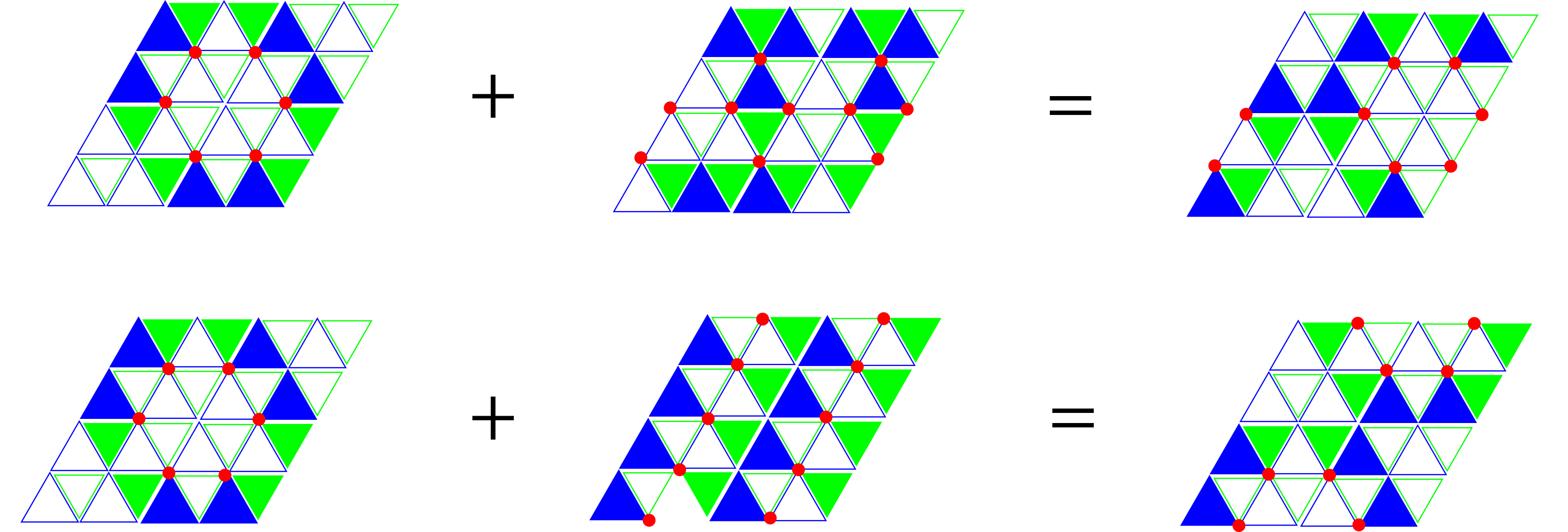}}
\end{tabular}
\caption{ (a) Complete generating set of solutions for the couple Newman-Moore model which maps to the constraints for Haah's code under the ansatz that a solution is constant in the $[111]$ direction. Larger scale solutions are given by tiling these solutions. (b) Two example which demonstrate that translated versions of the patterns are generated by addition with some other pattern.}\label{haah2}
\end{figure}

\begin{figure}[t]
\centering
\includegraphics[scale=.4]{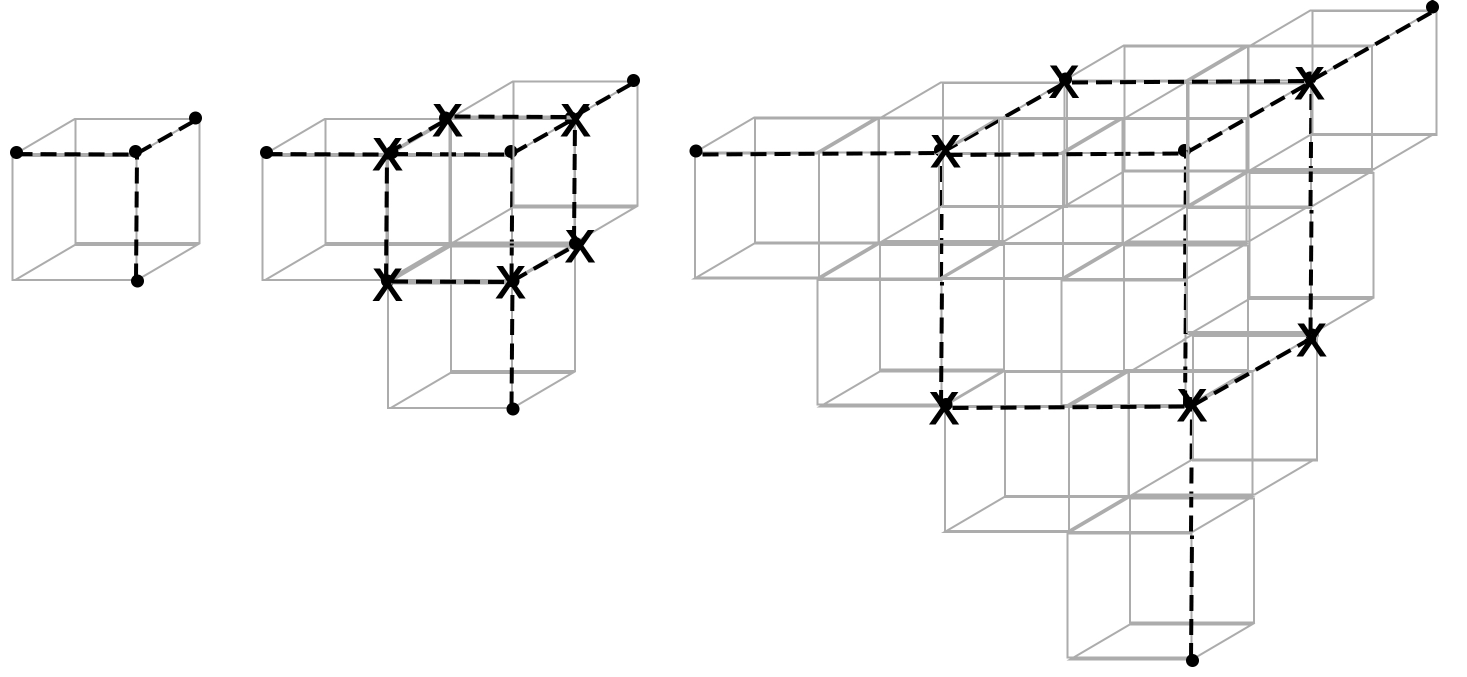}
\caption{Visual depiction of the fractal sequence of constraint conditions using Eq. \protect \eqref{eq:cons1}.}\label{compcont}
\end{figure}

\section{Final Discussion}\label{sec:discus}

In this paper, we have discussed and proven some powerful results for understanding stabilizer codes, their excitations and their logical operators. This led us to abstract the results and introduce the concept of a gauge structure which we then used to connect $\mb F_2$ models to $\mb R$ models. 

Gauge structures seem like a useful means of understanding certain models, especially ones that demonstrate topological properties. Still, we seem to lack an interpretation of the mathematical statements of a linear gauge structure as we can no longer view gauge invariance as a by-product of our demand that matter is invariant under some local group action. However at least in the case of stabilizer codes, we give another interpretation of gauge redundancy as a redundancy in the TPS of the Hilbert space. As stabilizers mutually commute, we can factor our Hilbert space into a TPS such that a given stabilizer is non-trivial in only one of the factors. But the dependency of stabilizers on one another, i.e. the existence of constraints, means this cannot be done consistently for all stabilizers and so there exists a necessary but arbitrary choice in how we factor the space. This is similar to ideas expressed in the context of lattice gauge theory in \cite{Casini2014}. This is also similar to Ref. \cite{Haegeman2015} where more emphasis is put on quantum states as opposed to the Lagrangian or Hamitonian. Excitations are then particles in the sense that they factor in these TPS's, but as a consequence of the factoring redundancy, they must satisfy conservation laws. Note that such conservation laws are much stronger than the typical notion of conservation based upon commutation with the Hamiltonian or enforcement by energetics. In the former cases, conservation is only approximate or spoiled by most perturbations to the Hamiltonian, whereas for the latter case, conservation as given by 
the BrLE rules is a mathematical certainty for states which factorize (exactly or approximately) according to theses TPS's. We might then say the state is in a phase commensurate with the gauge structure. 

Note how this interpretation is complementary to the usual interpretation of $U(1)$ theory. The usual interpretation starts with matter and invents a gauge field to maintain the local redundancy under the group action. In our interpretation, we assume the gauge field, which induces a TPS redundancy, and matter is then understood according to how fields factor with respects to the TPS's. We look to explore this interpretation more deeply in future work. This includes extending the definition of a linear gauge structures to non-linear gauge structures, where the most natural language for this might be category theory.

\section{Acknowledgments}

A special thanks to Rahul M. Nandkishore for encouraging discussion. I would also like to thank  Abhinav Prem and Michael Pretko for enlightening discussion and comments on the manuscript, as well as Sheng-jie Huang, Han Ma, Shriya Pai, Siddharth Parameswaran, and Markus Pflaum. The author is supported by the Air Force Office of Scientific Research under award number FA9550-17-1-0183.

\printbibliography

\end{document}